\newtheorem{definition}{Definition}
\newtheorem{example}{Example}
\newtheorem{lemma}{Lemma}
\newtheorem{theorem}{Theorem}
\newcommand{\SDI}{{\sffamily\tt{SDI}}}
\newcommand{\SDIRS}{{\sffamily\tt{SDI-RS}}}
\newcommand{\RS}{{\sffamily\tt{RangeSearch}}}
\newcommand{\BFS}{{\sffamily\tt{BFS}}}
\newcommand{\DFS}{{\sffamily\tt{DFS}}}
\newcommand{\BBS}{{\sffamily\tt{BBS}}}
\newcommand{\BITMAP}{{\sffamily\tt{Bitmap}}}
\newcommand{\BNL}{{\sffamily\tt{BNL}}}
\newcommand{\INDEX}{{\sffamily\tt{Index}}}
\newcommand{\LESS}{{\sffamily\tt{LESS}}}
\newcommand{\NL}{{\sffamily\tt{NL}}}
\newcommand{\NN}{{\sffamily\tt{NN}}}
\newcommand{\SFS}{{\sffamily\tt{SFS}}}
\newcommand{\SALSA}{{\sffamily\tt{SaLSa}}}
\newcommand{\SUBSKY}{{\sffamily\tt{SUBSKY}}}
\newcommand{\ZINC}{{\sffamily\tt{ZINC}}}
\newcommand{\ZSEARCH}{{\sffamily\tt{ZSearch}}}
\begin{document}

\title{An Efficient Skyline Computation Framework}

\author{Rui Liu}
\email{rui.liu@univ-tours.fr}
\affiliation{%
 \institution{University of Tours}
 \country{France}
}

\author{Dominique Li}
\email{dominique.li@univ-tours.fr}
\affiliation{%
 \institution{ LIFAT Laboratory, University of Tours}
 \country{France}
}

\begin{abstract}
Skyline computation aims at looking for the set of tuples that are not worse than any other tuples in all dimensions from a multidimensional database.
In this paper, we present {\SDI} (Skyline on Dimension Index), a dimension indexing conducted general framework to skyline computation.
We prove that to determine whether a tuple belongs to the skyline, it is enough to compare this tuple with a bounded subset of skyline tuples in an arbitrary dimensional index, but not with all existing skyline tuples.
Base on {\SDI}, we also show that any skyline tuple can be used to stop the whole skyline computation process with outputting the complete set of all skyline tuples.
We develop an efficient algorithm {\SDIRS} that significantly reduces the skyline computation time, of which the space and time complexity can be guaranteed.
Our experimental evaluation shows that {\SDIRS} outperforms the baseline algorithms in general and is especially very efficient on high-dimensional data.

\end{abstract}

\maketitle

\section{Introduction}

Skyline computation aims at looking for the set of tuples that are not worse than any other tuples in all dimensions with respect to given criteria from a multidimensional database.
Indeed, the formal concept of {\em skyline} was first proposed in 2001 by extending SQL queries to find interesting tuples with respect to multiple criteria \cite{Borzsony2001Operator}, with the notion of {\em dominance}: we say that a tuple $t$ {\em dominates} another tuple $t^\prime$ if and only if for each dimension, the value in $t$ is better than the respective value in $t^\prime$.
The predicate {\em better} can be defined by any total order, such as {\em less than} or {\em greater than}.

\begin{figure}[htbp]
\begin{center}
\includegraphics[scale=0.6]{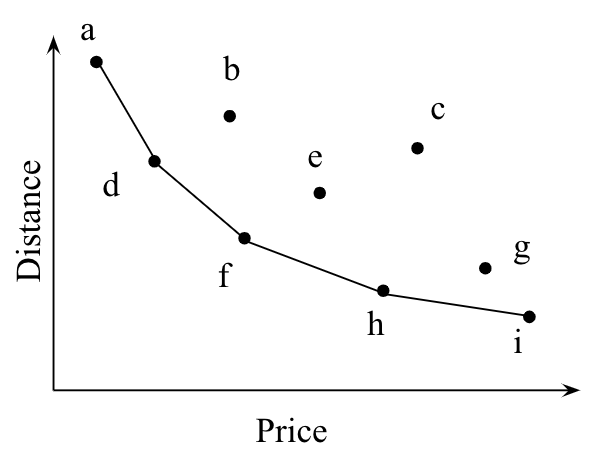}
\end{center}
\caption{The hotel Skyline on distance and price.}
\label{fig:hotel}
\end{figure}

For instance, Figure \ref{fig:hotel} shows the most mentioned example in skyline related literature where we consider the prices of hotels with respect to their distances to the city center (sometimes to the beach, or to the railway station, etc.).
If we are interested in hotels which are not only cheap but also close to the city center (the less value is the better), those represented by $a$, $d$, $f$, $h$, and $i$ constitute the skyline.
It's obvious that the hotel $d$ dominates the hotel $b$ since $d$ is better than $b$ in both distance and price; however, $a$ does not dominate $b$ because $a$ is better than $b$ in price but $b$ is however better than $a$ in distance.
In real-world database and user centric applications, such {\em Price-Distance} liked queries are doubtless interesting and useful, and have been widely recognized.

Since the first proposed {\BNL} algorithm \cite{Borzsony2001Operator}, the skyline computation problem has been deeply studied for about two decades and many algorithms have been developed to compute the Skyline, such as {\BITMAP}/{\INDEX} \cite{Tan2001IndexBitmap}, {\NN} \cite{Kossmann2002NN}, {\BBS} \cite{Papadias2005BBS}, {\SFS} \cite{Jan2005SFS}, {\LESS} \cite{Godfrey2005Less}, {\SALSA} \cite{Bartolini2006SaLSa}, {\SUBSKY} \cite{Tao2007Subsky}, {\ZSEARCH} \cite{Lee2007ZSearch}, and {\ZINC} \cite{Liu2010ZINC}.
However, the efficiencies brought by existing algorithms often depend on either complex data structures or specific application/data settings.
For instance, {\BITMAP} is based on a bitmap representation of dimensional values but is also limited by the cardinality; {\INDEX} is integrated into a B$^+$-tree construction process; {\NN} and {\BBS} rely to a specific data structure as R-tree besides {\NN} handles difficultly high-dimensional data (for instance $d > 4$ \cite{Tan2001IndexBitmap}); {\SUBSKY} specifically requires the B-tree and tuning the number of anchors; {\ZSEARCH} and {\ZINC} are built on top of the ZB-tree.
On the other hand, as well as paralleling Skyline computation \cite{chester2015scalable}, different variants of standard skylines have been defined and studied, such as top-k Skylines \cite{Tao2007Subsky}, streaming data Skylines \cite{lin2005stabbing}, partial ordered Skylines \cite{Liu2010ZINC}, etc., which are not in our scope.

In this paper, we present the {\SDI} ({\em Skyline on Dimension Index}) framework that allows efficient skyline computation by indexing dimensional values.
We first introduce the notion of {\em dimensional index}, based on which we prove that in order to determine whether a tuple belongs the skyline, it is enough to compare it only with existing skyline tuples present in any one dimensional index instead of comparing it with all existing skyline tuples, which can significantly reduce the total count of dominance comparisons while computing the skyline.
Furthermore, within the context of dimension indexing, we show that in most cases, one comparison instead of two is enough to confirm the dominance relation.
These properties can significantly reduce the total count of dominance comparisons that is mostly the bottleneck of skyline computation.
Different form all existing sorting/indexing based skyline algorithms, the application of dimension indexing allows to extend skyline computation to any total ordered categorical data such as, for instance, user preference on colors and on forms.
Based on {\SDI}, we also prove that any skyline tuple can be used to define a {\em stop line} crossing dimension indexes to terminate the skyline computation, which is in particular efficient on correlated data.
We therefore develop the algorithm {\SDIRS} (RangeSearch) for efficient skyline computation with dimension indexing.
Our experimental evaluation shows that {\SDIRS} outperforms our baseline algorithms ({\BNL}, {\SFS}, and {\SALSA}) in general, especially on high-dimensional data.

The remainder of this paper is organized as follows.
Section 2 reviews related skyline computation approaches.
In Section 3, we present our dimension indexing framework and prove several important properties, based on which we propose the algorithm {\SDIRS} in Section 4.
Section 5 reports our experimental evaluation of the performance of {\SDIRS} in comparison with several baseline benchmarks.
Finally, we conclude in Section 6.

\section{Related Work}

In this section, we briefly introduce mainstream skyline computation algorithms.

B\"{o}rzs\"{o}ny et al.\cite{Borzsony2001Operator} first proposed the concept of skyline and several basic computation algorithms, of wihch Nested Loop ({\NL}) is the most straightforward algorithm by comparing each pair of tuples, but always has the the same time complexity $O(n^2)$ no matter the distribution of the data.
Built on top of the naive {\NL} algorithm, Block Nested Loop ({\BNL}) algorithm employees memory window to speed up the efficiency significantly, and of which the best case complexity is reduced to $O(n)$ when there is no temporary file generated during the process of {\BNL}, however the worst case is $O(n^2)$, such as all tuples in database are incomparable with each other.

{\BITMAP} and {\INDEX} \cite{Tan2001IndexBitmap} are two efficient algorithms for skyline computation.
{\BITMAP} based skyline computation is very efficient, however it limits to databases with limited distinct value of each dimension; it also consumes high I/O cost and requires large memory when the database is huge.
{\INDEX} generates the index based on the best value's dimension of tuples.
It is clear that skyline tuples are more likely to be on the top of each index table, so index tables can prune tuple if one tuple's minimum value in all dimensions is larger than the maximal value of all dimensions of another tuple.

Sorted First Skyline ({\SFS}) \cite{Jan2005SFS} and Sort and Limit Skyline algorithm ({\SALSA}) \cite{Bartolini2006SaLSa} are another two pre-sort based algorithms.
{\SFS} has a similar process as {\BNL} but presorts tuples based on the skyline criteria before reading them into window.
{\SALSA} shares the same idea as {\SFS} to presort tuples, but the difference between {\SFS} and {\SALSA} is that they use different approach to optimize the comparison passes: {\SFS} uses entropy function to calculate the probability of one tuple being skyline and {\SALSA} uses stop point.
Indeed, {\SALSA} is designed on top of such an observation: if a skyline tuple can dominate all unread tuples, then the skyline computation can be terminated.
Such a special tuple is called the {\em stop point} in {\SALSA}, which can effectively prune irrelevant tuples that they cannot be in the Skyline.
However, the selection of the stop point depends on dominance comparisons that is completely different from our notion of stop line, which is determined by dimensional indexes without dominance comparison.

{\SUBSKY} algorithm \cite{Tao2007Subsky} converts the $d$-dimensional tuples into 1D value $f(t)$ so all tuples will be sorted based on $f(t)$ value and that helps to determine whether a tuple is dominated by a skyline tuple.
{\SUBSKY} sorts the whole database on full space but calculates skyline on subspace based on user criteria.
Nevertheless, the full space index may not be accurate when pruning data as the index maybe calculated on unrelated dimension.
{\SDI} also supports to calculate skyline on subspace but without re-sorting tuples.
Moreover, dimension index could guarantee the best sorting of subspace and prune more tuples.

Besides sorting based algorithms, there are some algorithms solve the skyline computation problem using R-tree structure, such as {\NN} (Nearest Neighbors) \cite{Kossmann2002NN} and {\BBS} (Branch-and-Bound Skyline) \cite{Papadias2005BBS}.
{\NN} discovers the relationships between nearest neighbors and skyline results.
It is observed that the skyline tuple must be close to the coordinate origin: the tuple which stays closest to the coordinate origin must be a part of the skyline.
Using the first skyline tuple, the database can be further split to several regions, and the first skyline tuple becomes the coordinate origin of these regions.
The nearest point of each region are part of skyline tuples as well, so the whole process iterates until there is no more region split. 
{\BBS} uses the similar idea as {\NN}.
The main difference between {\NN} and {\BBS} is that {\NN} process may include redundant searches but {\BBS} only needs one traversal path.
{\NN} and {\BBS} are both efficient but nevertheless rely on complex data structure which is not necessary for {\SDI} algorithm.

\section{Dimension Indexing for Skyline Computation}

We present in this section the {\SDI} (Skyline on Dimension Index) framework, within which we prove several interesting properties that allow to significantly reduce the total count of dominance comparisons during the skyline computation.

Let $\mathcal D$ be a $d$-dimensional database that contains $n$ tuples, each tuple $t \in \mathcal D$ is a vector of $d$ attributes with $|t| = d$.
We denote $t[i]$, for $1 \leq i \leq d$, the {\em dimensional value} of a tuple $t$ in {\em dimension} $i$ (in the rest of this paper, we consider by default that $i$ satisfies $1 \leq i \leq d$).
Given a total order $\succ_i$ on all values in dimension $i$, we say that the value $t[i]$ of the tuple $t$ is {\em better} than the respective value $t^\prime[i]$ of the tuple $t^\prime$ if and only if $t[i] \succ_i t^\prime[i]$; if $t[i] = t^\prime[i]$, we say that $t[i]$ is {\em equal} to $t^\prime[i]$, and so that $t^\prime[i]$ is {\em not worse} than $t[i]$ if and only if $t[i] \succ_i t^\prime[i] \lor t[i] = t^\prime[i]$, denoted by $t[i] \succeq_i t^\prime[i]$.
Besides, $t^\prime[i]$ is {\em not better} than $t[i]$ is denoted by $t[i] \not\succ t^\prime[i]$
We have that $t[i] \succ_i t^\prime[i] \Rightarrow t[i] \succeq_i t^\prime[i]$.
Without lose of the generality, we denote by the total order $\succ$ the ensemble of all total orders $\succ_i$ on all dimensions and, without confusion, $\{\succ, \succeq, \not\succ\}$ instead of $\{\succ_i, \succeq_i, \not\succ_i\}$.

\begin{definition}[Dominance]
Given the total order $\succ$ and a database $\mathcal D$, a tuple $t \in \mathcal D$ dominates a tuple $t^\prime \in \mathcal D$ if and only if $t[i] \succeq t^\prime[i]$ on each dimension $i$, and $t[k] \succ t^\prime[k]$ for at least one dimension $k$, denoted by $t \succ t^\prime$.
\end{definition}

Further, we denote $t \prec\succ t^\prime \iff t \not\succ t^\prime \land t^\prime \not\succ t$ that the tuple $t$ and the tuple $t^\prime$ are {\em incomparable}.
A tuple is a {\em skyline tuple} if and only if there is no tuple can dominate it.
We therefore formally define {\em skyline} as follows.

\begin{definition}[Skyline]
Given the total order $\succ$ and a database $\mathcal D$, a tuple $t \in \mathcal D$ is a skyline tuple if and only if $\not\exists u \in \mathcal D$ such that $u \succ t$.
The skyline $\mathcal S$ on $\succ$ is the complete set of all skyline tuples that $\mathcal S = \{t \in \mathcal D \mid \not\exists u \in \mathcal D, u \succ t\}$.
\end{definition}

It's easy to see that the skyline $\mathcal S$ of a database $\mathcal D$ is the complete set of all incomparable tuples in $D$, that is, $s \prec\succ t$ for any two tuples $s, t \in \mathcal S$.

\begin{table}[htbp]
\begin{center}
{\scriptsize\begin{tabular}{|r|cccccc|c|}
\hline
ID & $D_1$ & $D_2$ & $D_3$ & $D_4$ & $D_5$ & $D_6$ & Skyline\\
\hline
$t_0$ & 7.5 & 1.3 & 7.5 & 4.5 & 5.3 & 2.1 & {\bf Yes}\\
\hline
$t_1$ & 4.7 & 6.7 & 6.7 & 9.3 & 3.8 & 5.1 & {\bf Yes}\\
\hline
$t_2$ & 8.4 & 9.4 & 5.3 & 5.8 & 6.7 & 7.5 & No\\
\hline
$t_3$ & 5.3 & 6.6 & 6.7 & 6.8 & 5.8 & 9.3 & {\bf Yes}\\
\hline
$t_4$ & 8.4 & 5.2 & 5.1 & 5.5 & 4.1 & 7.5 & {\bf Yes}\\
\hline
$t_5$ & 9.1 & 7.6 & 2.6 & 4.7 & 7.3 & 6.2 & {\bf Yes}\\
\hline
$t_6$ & 5.3 & 7.5 & 1.9 & 5.9 & 3.4 & 1.8 & {\bf Yes}\\
\hline
$t_7$ & 5.3 & 7.5 & 6.7 & 7.2 & 6.3 & 8.8 & No\\
\hline
$t_8$ & 6.7 & 7.3 & 7.6 & 9.7 & 5.3 & 8.7 & No\\
\hline
$t_9$ & 7.5 & 9.6 & 4.8 & 8.9 & 9.5 & 6.5 & No\\
\hline
\end{tabular}}
\end{center}
\caption{A sample database with $d = 6$, $n = 10$, and $m = 6$.}
\label{tab:sample}
\end{table}

Table \ref{tab:sample} shows a sample database of 6 dimensions ($d = 6$) that contains 10 tuples ($n = 10$), of which 6 are skyline tuples ($|\mathcal S| = 6$, we also note the size of Skyline as $m$ with reference to most literature) while the order {\em less than} is applied to all dimensions.

\begin{example}
Among all the 10 tuples $t_0, t_1, \ldots, t_9$ listed in Table \ref{tab:sample}, $t_1 \succ t_8$, $t_4 \succ t_2$, $t_6 \succ t_7$, and $t_6 \succ t_9$; $t_0$, $t_3$, and $t_5$ do not dominate any tuples and are not dominated by any tuples.
The Skyline is therefore $\mathcal S = \{t_0, t_1, t_3, t_4, t_5, t_6\}$.
\qed
\end{example}

The basis of our approach is to build dimensional indexes with respect to the concerned per-dimension total orders that allow to determine the skyline without performing dominance comparisons neither to all tuples in the database nor to all tuples in current skyline.
In general, our approach can significantly reduce the total number of dominance comparisons, which plays an essential role that definitively affects the total processing time of Skyline computation.
Furthermore, our approach constructs the Skyline progressively so no delete operation is required.

For each dimension $i$ of the database $\mathcal D$, the total order $\prec_i$ can be considered as a sorting function $f_i : \mathcal D[i] \rightarrow \mathcal I_i$, where $\mathcal I_i$ is an ordered list of all tuple values in the dimension $i$ of database.
We call such a list $\mathcal I_i$ a {\em dimensional index}.

\begin{definition}[Dimensional Index]
Given a database $\mathcal D$, the dimensional index $\mathcal I_i$ for a dimension $i$ is an ordered list of tuple IDs sorted first by dimensional values with respect to the total order $\succ$, and then, in case of ties, by their lexicographic order.
\end{definition}

In order to avoid unnecessary confusions, we represent a dimensional index $\mathcal I_i$ as a list of entries $\left<t[i]:t.id\right>$ such as which shown in Table \ref{tab:full-di} (where all skyline tuples are in bold).

\begin{table}[htbp]
\begin{center}
{\scriptsize\begin{tabular}{|c|c|c|c|c|c|}
\hline
$\mathcal I_1$ & $\mathcal I_2$ & $\mathcal I_3$ & $\mathcal I_4$ & $\mathcal I_5$ & $\mathcal I_6$\\
\hline
{\bf 4.7:1} & {\bf 1.3:0} & {\bf 1.9:6} & {\bf 4.5:0} & {\bf 3.4:6} & {\bf 1.8:6}\\
{\bf 5.3:3} & {\bf 5.2:4} & {\bf 2.6:5} & {\bf 4.7:5} & {\bf 3.8:1} & {\bf 2.1:0}\\
{\bf 5.3:6} & {\bf 6.6:3} &     {4.8:9} & {\bf 5.5:4} & {\bf 4.1:4} & {\bf 5.1:1}\\
    {5.3:7} & {\bf 6.7:1} & {\bf 5.1:4} &     {5.8:2} & {\bf 5.3:0} & {\bf 6.2:5}\\
    {6.7:8} &     {7.3:8} &     {5.3:2} & {\bf 5.9:6} &     {5.3:8} &     {6.5:9}\\
{\bf 7.5:0} & {\bf 7.5:6} & {\bf 6.7:1} & {\bf 6.8:3} & {\bf 5.8:3} &     {7.5:2}\\
    {7.5:9} &     {7.5:7} & {\bf 6.7:3} &     {7.2:7} &     {6.3:7} & {\bf 7.5:4}\\
    {8.4:2} & {\bf 7.6:5} &     {6.7:7} &     {8.9:9} &     {6.7:2} &     {8.7:8}\\
{\bf 8.4:4} &     {9.4:2} & {\bf 7.5:0} & {\bf 9.3:1} & {\bf 7.3:5} &     {8.8:7}\\
{\bf 9.1:5} &     {9.6:9} &     {7.6:8} &     {9.7:8} &     {9.5:9} & {\bf 9.3:3}\\
\hline
\end{tabular}}
\end{center}
\caption{Dimension indexing of the sample database shown in Table \ref{tab:sample}.}
\label{tab:full-di}
\end{table}

\begin{example}
Table \ref{tab:full-di} shows the 6 dimensional indexes $\mathcal I_1, \mathcal I_2, \ldots, \mathcal I_6$ with respect to all the 6 dimensions $D_I, D_2, \ldots, D_6$ of the sample database shown in Table \ref{tab:sample}.
We show in detail that in $\mathcal I_1$, the dimensional value 5.3 appears in 3 tuples so these 3 entries are secondarily sorted by tuple IDs for $3 < 6 < 7$.
\qed
\end{example}

Now let us consider the dimensional indexes containing distinct dimensional values only, such as $\mathcal I_4$ shown in Table \ref{tab:full-di}.
In such an index $\mathcal I_i$ without duplicate dimensional values, we see that a tuple $t$ can only be dominated by a tuple $s$ such that $o_i(s) < o_i(t)$ (implies that $s[i] < t[i]$ since for any tuple $u$ such that $o_i(u) > o_i(t)$, we have that $t[i] \succ u[i]$ so $u$ cannot dominates $t$.

\begin{lemma}
Given a database $\mathcal D$, let $\mathcal S$ be the skyline of $\mathcal D$, $\mathcal I_i$ be a dimensional index containing only distinct dimensional values, and $t \in \mathcal D$ be a tuple.
Then, $t \in \mathcal S$ if and only if we have $s \not\succ t$ for any skyline tuple $s \in \mathcal S$ such that $o_i(s) < o_i(t)$ on $\mathcal I_i$.
\qed
\label{lem:dist}
\end{lemma}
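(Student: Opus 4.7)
The plan is to prove the biconditional by handling the two directions separately. The forward direction ($t \in \mathcal S \Rightarrow s \not\succ t$ for every preceding skyline tuple) is immediate from the definition of $\mathcal S$: no tuple in $\mathcal D$ can dominate a skyline tuple, so in particular no skyline tuple $s$ with $o_i(s) < o_i(t)$ can. This half requires no further work.

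For the reverse direction I would argue by contraposition, assuming $t \notin \mathcal S$ and exhibiting a skyline tuple $s$ with $o_i(s) < o_i(t)$ that dominates $t$. First, I establish that some skyline tuple dominates $t$. Because $t \notin \mathcal S$, the set $U = \{u \in \mathcal D : u \succ t\}$ is nonempty; since $\mathcal D$ is finite and $\succ$ is a strict partial order (transitive and antisymmetric), $U$ contains a $\succ$-maximal element $s$, and any dominator $v$ of $s$ would satisfy $v \succ t$ by transitivity, contradicting the maximality of $s$ in $U$. Hence $s \in \mathcal S$. Second, I locate $s$ in $\mathcal I_i$: from $s \succ t$ we have $s[i] \succeq t[i]$ and $s \neq t$; since $\mathcal I_i$ contains only distinct dimensional values, $s[i] \neq t[i]$, so $s[i] \succ t[i]$, which by the ordering of $\mathcal I_i$ gives $o_i(s) < o_i(t)$. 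This $s$ contradicts the hypothesis, completing the proof.

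The delicate point is the use of the distinct-values assumption: without it, the lexicographic tie-breaking could rank a dominating skyline tuple $s$ with $s[i] = t[i]$ strictly after $t$ in $\mathcal I_i$, invalidating the bound $o_i(s) < o_i(t)$. This is exactly what forces the lemma to restrict to indexes without duplicates, and I expect the subsequent development in the paper to handle the general (ties allowed) case with an adjusted bound. The remaining ingredient—that every non-skyline tuple in a finite database is dominated by some skyline tuple—is a routine transitivity-and-finiteness argument and poses no real difficulty.
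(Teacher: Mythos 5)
Your proof is correct, and it takes a genuinely different route from the paper's. The paper proves the nontrivial direction directly: assuming $s \not\succ t$ for every skyline tuple $s$ preceding $t$ in $\mathcal I_i$, it checks that $t$ is incomparable with \emph{every} skyline tuple --- those with $o_i(s) < o_i(t)$ by the hypothesis (together with $t \not\succ s$, since $s$ is strictly better in dimension $i$), and those with $o_i(s^\prime) > o_i(t)$ because $t$ is strictly better than $s^\prime$ in dimension $i$ --- and then concludes $t \in \mathcal S$ by implicitly invoking the fact that a tuple incomparable with all skyline tuples is itself a skyline tuple. You instead argue by contraposition: if $t \notin \mathcal S$, you extract a $\succ$-maximal dominator $s$ of $t$, show it is a skyline tuple via transitivity and finiteness, and then use the distinct-values assumption to force $o_i(s) < o_i(t)$, contradicting the hypothesis. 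Both arguments hinge on the same observation (with distinct values in dimension $i$, any dominator of $t$ must strictly precede $t$ in $\mathcal I_i$), but your version makes explicit the closure fact that the paper glosses over --- that every non-skyline tuple is dominated by some skyline tuple, which is exactly your maximal-element argument --- and so is the more rigorous of the two; the paper's direct version, in exchange, establishes the stronger intermediate conclusion that $t$ is incomparable with all skyline tuples, which supports its later remarks on progressive construction of the skyline. Your proof also sidesteps the typo in the paper's argument (it writes $s^\prime \not\succ s^\prime$ where $s^\prime \not\succ t$ is intended).
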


\begin{proof}
If $o_i(t) = 0$, then $t$ is a skyline tuple because no tuple is better than $t$ in the dimension $D_i$ since all dimensional values are distinct.
If $o_i(t) > 0$, let $s \in \mathcal S$ be a skyline tuple such that $o_i(s) < o_i(t)$, then $s[i] < t[i]$, thus, $s \not\succ t \Rightarrow \exists l \neq i$ such that  $t[l] \succ s[l]$, that is, $s \prec\succ t$; now let $s^\prime \in \mathcal S$ be a skyline tuple such that $o_i(t) < o_i(s^\prime)$, then $s^\prime \in \mathcal S \Rightarrow t \not\succ s^\prime$, further, $t[i] \prec_i s^\prime[i] \Rightarrow s^\prime \not\succ s^\prime$, so we also have $t \prec\succ s^\prime$.
Thus, $t$ is incomparable to any skyline tuple so $t$ is a skyline tuple, that is, $t \in \mathcal S$.
\end{proof}

With Lemma \ref{lem:dist}, to determine whether a tuple $t$ is a skyline tuple, it is only necessary to compare $t$ with each skyline tuple $s$ in one dimension $i$ such that $o_i(s) < o_i(t)$, instead of comparing $t$ with all skyline tuples.
Furthermore, we recall that {\BNL}-like algorithms dynamically update the early skyline set that require a second dominance comparison between an incoming tuple $t$ and early skyline tuple $s$ to determine whether $t \succ s$.
However, with dimensional indexes, Lemma \ref{lem:dist} shows that one dominance comparison $s \not\succ t$ is enough to determine $t \in \mathcal S$, instead of two comparisons.
Lemma \ref{lem:dist} also ensures a {\em progressive} construction of the skyline.

However, in most cases and particularly in real data, there are often duplicate values in each dimension where Lemma \ref{lem:dist} cannot be established.
As shown in Table \ref{tab:full-di}, we can find that there are duplicate values in most dimensions, where a typical instance is $\mathcal I_1$, in which two different cases should be identified:
\begin{enumerate}
\item The dimensional value 5.3 appears in three entries $\left<5.3:3\right>$, $\left<5.3:6\right>$, and $\left<5.3:7\right>$ where $t_3$ and $t_6$ are skyline tuples and $t_7$ is not skyline tuple.
\item The dimensional value 8.4 appears in both of the two entries $\left<8.4:2\right>$ and $\left<8.4:4\right>$, where $t_2$ is not skyline tuple but is indexed before the skyline tuple $t_4$.
\end{enumerate}

In the case (1), a simple straightforward scan on these three dimensional index entries can progressively identify that $t_3$ ($t_1 \not\succ t_3$) and $t_6$ ($t_1 \not\succ t_6$ and $t_3 \not\succ t_6$) are skyline tuples and filter out $t_7$ ($t_6 \succ t_7$).
However, in the case (2), a straightforward scan cannot progressively identify skyline tuples because: there is no precedent tuple dominating $t_2$ so $t_2$ will be first identified as a skyline tuple; then, since no tuple can dominate $t_4$, $t_4$ will identified as a skyline tuple without checking whether $t_4 \not\succ t_2$, hence, finally the output skyline is wrong.

To resolve such misidentifications of skyline tuples, we propose a simple solution that first divides a dimensional index into different logical {\em blocks} of entries with respect to each distinct dimensional value, then apply the {\BNL} algorithm to each {\em block} containing more than one entry to find {\em block skyline tuples} in order to establish Lemma \ref{lem:dist}.

\begin{definition}[Index Block]
Given a database $\mathcal D$, let $\mathcal I_i$ be the dimensional index of a dimension $i$.
An index block of $\mathcal I_i$ is a set of dimensional index entries that share the same dimensional value sorted by the lexicographical order of tuple IDs.
\end{definition}

If each block contains one entry, the only tuple will be compared with existing skyline tuples with respect to Lemma \ref{lem:dist}; otherwise, for any block contains more than one entry, each {\em block skyline tuple} must be compared with existing skyline tuples with respect to Lemma \ref{lem:dist}.
We can generalize the notion of tuples in Lemma \ref{lem:dist} to block skyline tuples because one block contains one entry, the concerned tuples are block skyline tuples.

\begin{theorem}
Given a database $\mathcal D$, let $\mathcal S$ be the Skyline of $\mathcal D$, $\mathcal I_i$ be a dimensional index, and $t \in \mathcal D$ be a block skyline tuple on $\mathcal I_i$.
Then, $t \in \mathcal S$ if and only if we have $s \not\succ t$ for any skyline tuple $s \in \mathcal S$ such that $o_i(s) < o_i(t)$ on $\mathcal I_i$.
\label{the:sdi}
\end{theorem}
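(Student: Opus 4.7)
The plan is to prove both directions. The forward direction is immediate: if $t \in \mathcal S$, then by definition no tuple in $\mathcal D$ dominates $t$, so in particular no skyline tuple preceding $t$ in $\mathcal I_i$ does. All the work lies in the converse.

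For the converse, assume $s \not\succ t$ for every $s \in \mathcal S$ with $o_i(s) < o_i(t)$. I would fix an arbitrary $u \in \mathcal D$ and show $u \not\succ t$, case-splitting by the position of $u$ in $\mathcal I_i$ relative to $t$. When $u[i] \neq t[i]$ and $o_i(u) > o_i(t)$, we have $t[i] \succ u[i]$, so $u$ is strictly worse than $t$ in dimension $i$ and thus cannot dominate it. When $u[i] = t[i]$ (i.e.\ $u$ lies in the same block as $t$), the block skyline property of $t$ directly gives $u \not\succ t$, since that property precisely asserts that no other tuple of $t$'s block dominates $t$, covering both predecessors and successors of $t$ within the block. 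The only remaining case is $u[i] \succ t[i]$, i.e.\ $u$ sits in a strictly earlier block than $t$ on $\mathcal I_i$.

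In that case I would branch on whether $u \in \mathcal S$. If yes, the hypothesis applies directly and yields $u \not\succ t$. Otherwise there exists some $u^\prime \in \mathcal S$ with $u^\prime \succ u$; the key step is to observe that $u^\prime[i] \succeq u[i] \succ t[i]$ forces $u^\prime[i] \succ t[i]$ by transitivity of $\succ$, so $u^\prime$ also sits in a strictly earlier block than $t$, giving $o_i(u^\prime) < o_i(t)$ and hence $u^\prime \not\succ t$ by hypothesis. A contradiction would then follow from assuming $u \succ t$: composing $u^\prime \succeq u$ coordinate-wise with $u \succeq t$ coordinate-wise gives $u^\prime \succeq t$ on every coordinate, and the strict coordinate witnessing $u \succ t$ lifts through $u^\prime \succeq u$ at that same coordinate to witness $u^\prime \succ t$, contradicting $u^\prime \not\succ t$.

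I expect the main obstacle to be the coordinate-wise bookkeeping in this last transitivity argument, specifically ensuring that the strict inequality in $u \succ t$ propagates through $u^\prime \succeq u$ to a genuine strict inequality for $u^\prime$ over $t$ rather than collapsing into a mere $\succeq$. A secondary subtlety is invoking the block skyline definition precisely enough to rule out dominators of $t$ on both sides of $t$ within its block in $\mathcal I_i$, which is what rescues the generalization of Lemma~\ref{lem:dist} in the presence of duplicate dimensional values.
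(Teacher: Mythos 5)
Your proof is correct, and its skeleton --- a case analysis on the position of a candidate dominator relative to $t$ in $\mathcal I_i$ --- is the same one underlying the paper's argument. The difference is one of completeness rather than strategy. The paper proves Theorem~\ref{the:sdi} by declaring it ``immediate'' from Lemma~\ref{lem:dist}, and the proof of that lemma only ever compares $t$ against \emph{skyline} tuples: it shows $t$ is incomparable to every skyline tuple before and after $t$ in the index, then jumps to $t \in \mathcal S$. That jump silently uses the fact that a tuple not dominated by any skyline tuple is not dominated by any tuple at all. Your sub-case for $u \notin \mathcal S$ is precisely the missing justification: you pick a skyline dominator $u^\prime \succ u$, push it into a strictly earlier block via $u^\prime[i] \succeq u[i] \succ t[i]$ so the hypothesis applies to $u^\prime$, and then derive the contradiction by transitivity of dominance. (Your coordinate-wise composition is just an explicit proof of that transitivity; note that the strict coordinate is already supplied by $u^\prime[i] \succ t[i]$, so the ``lifting'' step you flagged as the main obstacle is not even needed.) Likewise, your quantification over all $u \in \mathcal D$, with the same-block case discharged by the block-skyline hypothesis, makes explicit the role that the paper compresses into the phrase ``the statement of block skyline tuples.'' The only point you assert without proof --- every non-skyline tuple is dominated by some skyline tuple --- is standard, but in a careful write-up you should note it relies on finiteness of $\mathcal D$ together with transitivity and irreflexivity of dominance, since it is exactly the kind of step the paper itself elides.
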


\begin{proof}
With the proof of Lemma \ref{lem:dist} and the statement of block skyline tuples, the proof of Theorem \ref{the:sdi} is immediate.
\end{proof}

\begin{table}[htbp]
\begin{center}
{\scriptsize\begin{tabular}{|c|c|c|c|}
\hline
\multicolumn{4}{|l|}{$\mathcal I_1$}\\
\hline
\hline
4.7:1 & \multicolumn{3}{l|}{}\\
\hline
{\bf 5.3:3} & {\bf 5.3:6} & 5.3:7 & ~~~~~~~~\\
\hline
6.7:8 & \multicolumn{3}{l|}{}\\
\hline
{\bf 7.5:0} & {\bf 7.5:9} & \multicolumn{2}{l|}{}\\
\hline
8.4:2 & {\bf 8.4:4} & \multicolumn{2}{l|}{}\\
\hline
9.1:5 & \multicolumn{3}{l|}{}\\
\hline
\end{tabular}}
\end{center}
\caption{A block view of the dimensional index $\mathcal I_i$.}
\label{tab:block}
\end{table}

\begin{example}
As shown in Table \ref{tab:block}, 6 blocks can be located from $\mathcal I_1$ with respect to all 6 distinct values: 4.7, 5.3, 6.7, 7.5, 8.4, and 9.1.
According to Theorem \ref{the:sdi}:
the block 4.7 contains $t_1$, so $t_1$ is a block skyline tuple and is the first skyline tuple;
the block 5.3 contains $t_3$, $t_6$, and $t_7$ where $t_3 \prec\succ t_6$ and $t_6 \succ t_7$, so $t_3$ and $t_6$ block skyline tuples such that $t_1 \prec\succ t_3$ and $t_1 \prec\succ t_6$, hence, $t_3$ and $t_6$ are new skyline tuples;
the block 6.7 contains $t_8$, so $t_8$ is a block skyline tuple that is dominated by $t_6$;
the block 7.5 is different from the block 5.3, where $t_0 \prec\succ t_9$ so both of them are block skylines, and we have $t_6 \succ t_9$ so $t_0$ is a skyline tuple;
the block 8.3 is the same case as the block 5.3, where $t_4$ is a skyline tuple;
finally, no skyline tuple dominates $t_5$, so the Skyline is $\{t_0, t_1, t_3, t_4, t_5, t_6\}$.
\qed
\end{example}

It is important to note that Theorem \ref{the:sdi} allows dominance comparisons to be performed on arbitrary dimensional indexes and the computation stops while the last entry in any index is reached.
Therefore, we see that a dynamic {\em dimension switching} strategy can further improve the efficiency of the Skyline computing based on dimension indexing.
For instance, if we proceed a breadth-first search strategy among all dimensional indexes shown in Table \ref{tab:full-di}, while we examine the second entry $\left<2.6:5\right>$ in $\mathcal I_3$, although currently $\mathcal S = \{t_0, t_1, t_6, t_3, t_4\}$, we do not have to compare $t_5$ with all those skyline tuples but only with $t_6$; if we continue to examine the second entry in $\mathcal I_4$, $t_5$ can be ignored since it is already a skyline tuple.
We also note that duplicate dimensional values present in tuples severely impact the overall performance of dimensional index based Skyline computation, therefore, reasonable dimension selection/sorting heuristics shall be helpful.

\section{A Range Search Approach to Skyline}

In this section, we first propose the notion of {\em stop line} that allows terminate searching skyline tuples by pruning non relevant tuples, then present the algorithm {\SDIRS} ({\RS}) for skyline computation based on the {\SDI} framework.
Notice that the name {\RS} stands for the bounded search range while determining skyline tuples.

\subsection{Stop Line}

Let us consider again the Skyline and the dimensional indexes shown in Table \ref{tab:full-di}.
It is easy to see that all 6 skyline tuples can be found at the first two entries of all dimensional indexes, hence, a realistic question is whether we can stop the Skyline computation before reaching the end of any dimensional index.

\begin{definition}[Stop Line]
Given a database $\mathcal D$, let $p \in \mathcal D$ be a skyline tuple. A stop line established from $p$, denoted by $S_p$, is a set of dimensional index entries $\left<p[i]:p\right>$ such that $p$ appears in each dimension.
An index entry $e \S_p$ is a stop line entry and an index block containing a stop line entry is a stop line block.
\end{definition}

Let $t$ be a tuple, we denote $b_i(t)$ the offset of the index block on a dimensional index $\mathcal I_i$ that contains $t$, that is, the position of the index block that contains $t$.
Hence, let $p$ be a stop line tuple and $t$ be a tuple, we say that the stop line $S_p$ {\em covers} the index entry $\left<t[i]:t\right>$ on a dimensional index $\mathcal I_i$ if $b_i(p) < b_i(t)$.
For instance, Table \ref{tab:sl-6} shows the stop line created from the tuple $t_6$, which totally covers 41 index entries without $\left<5.3:7\right>$ on $\mathcal I_1$ neither $\left<7.5:7\right>$ on $\mathcal I_2$.

\begin{table}[htbp]
\begin{center}
{\scriptsize\begin{tabular}{|c|c|c|c|c|c|}
\hline
$\mathcal I_1$ & $\mathcal I_2$ & $\mathcal I_3$ & $\mathcal I_4$ & $\mathcal I_5$ & $\mathcal I_6$\\
\hline
{\bf 4.7:1} & {\bf 1.3:0} & \underline{\bf 1.9:6} & {\bf 4.5:0} & \underline{\bf 3.4:6} & \underline{\bf 1.8:6}\\
{\bf 5.3:3} & {\bf 5.2:4} & {\bf 2.6:5} & {\bf 4.7:5} & {\bf 3.8:1} & {\bf 2.1:0}\\
\underline{\bf 5.3:6} & {\bf 6.6:3} &     {4.8:9} & {\bf 5.5:4} & {\bf 4.1:4} & {\bf 5.1:1}\\
    {5.3:7} & {\bf 6.7:1} & {\bf 5.1:4} &     {5.8:2} & {\bf 5.3:0} & {\bf 6.2:5}\\
    {6.7:8} &     {7.3:8} &     {5.3:2} & \underline{\bf 5.9:6} &     {5.3:8} &     {6.5:9}\\
{\bf 7.5:0} & \underline{\bf 7.5:6} & {\bf 6.7:1} & {\bf 6.8:3} & {\bf 5.8:3} &     {7.5:2}\\
    {7.5:9} &     {7.5:7} & {\bf 6.7:3} &     {7.2:7} &     {6.3:7} & {\bf 7.5:4}\\
    {8.4:2} & {\bf 7.6:5} &     {6.7:7} &     {8.9:9} &     {6.7:2} &     {8.7:8}\\
{\bf 8.4:4} &     {9.4:2} & {\bf 7.5:0} & {\bf 9.3:1} & {\bf 7.3:5} &     {8.8:7}\\
{\bf 9.1:5} &     {9.6:9} &     {7.6:8} &     {9.7:8} &     {9.5:9} & {\bf 9.3:3}\\
\hline
\end{tabular}}
\end{center}
\caption{The stop line created from tuple $t_6$ covers 41 index entries in total.}
\label{tab:sl-6}
\end{table}

Obviously, let $p$ be a stop line tuple and $t$ be a tuple such that $p \prec t$, then we have that $b_i(p) \leq b_i(t)$ on each dimensional index $\mathcal I_i$ and $b_k(p) < b_k(t)$ on at least one dimensional index $\mathcal I_k$.

\begin{theorem}
Given a database $\mathcal D$, let $S_p$ be the stop line with respect to a skyline tuple $p$.
By following any top-down traversal of all dimensional indexes, if all stop line blocks have been traversed, then the complete set of all skyline tuples has been generated and the skyline computation can stop.
\label{the:stp}
\end{theorem}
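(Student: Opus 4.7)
The plan is to partition $\mathcal D$ into the tuples that have been read when every stop line block has been traversed and those that have not, and then argue separately that read tuples are classified correctly and that unread tuples are dominated by $p$. Writing $V$ for the read set and $C = \mathcal D \setminus V$ for the unread set, the theorem reduces to two claims: (i) no tuple in $C$ is a skyline tuple, and (ii) every skyline tuple in $V$ has been produced by the traversal.

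First I would dispatch (i), which should be the easy direction. If $t \in C$, then its entry in every $\mathcal I_i$ is still unread while every stop line block has already been traversed, so $b_i(p) < b_i(t)$ must hold in each dimension $i$. Because the blocks of $\mathcal I_i$ are totally ordered by $\succ$ on dimensional values and two tuples share a block iff they share the $i$-th value, $b_i(p) < b_i(t)$ forces the strict inequality $p[i] \succ t[i]$. This holds in every dimension, so $p \succ t$ and $t \notin \mathcal S$.

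Next I would handle (ii) by invoking Theorem~\ref{the:sdi}. For $t \in V$, pick any dimension $i$ in which $t$'s entry has been read. A top-down traversal of $\mathcal I_i$, even when interleaved with other indexes under a dimension switching strategy, still processes the entries of $\mathcal I_i$ in order of increasing $o_i$, so every earlier entry of $\mathcal I_i$ has been read and, in particular, every earlier skyline tuple on $\mathcal I_i$ has already been produced. Theorem~\ref{the:sdi} then guarantees that $t$ is correctly identified as a skyline tuple or discarded by the time its block on $\mathcal I_i$ is finished, hence at the latest when the stop line block on $\mathcal I_i$ is reached.

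The main obstacle I foresee is the edge case in which $t$ shares a block with $p$ on some dimension $i$, so that neither $b_i(t) < b_i(p)$ nor $b_i(p) < b_i(t)$ holds. The block-level formulation of Theorem~\ref{the:sdi} saves the argument: a stop line block is processed in full as soon as the traversal enters it, so any such co-block $t$ automatically belongs to $V$ and is classified together with the other block skyline tuples there. Assembling (i) and (ii), the generated set coincides with $\mathcal S$ the moment the last stop line block is traversed, and the computation may safely terminate.
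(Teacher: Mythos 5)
Your proof is correct, and its geometric core is the same as the paper's: the position of the stop line blocks relative to the skyline tuples. The paper argues the direct form --- every skyline tuple $t \neq p$ is incomparable with $p$, hence beats $p$ in some dimension $k$, hence sits in a block strictly before $p$'s block on $\mathcal I_k$ and has already been traversed (with the identical-values case handled separately, exactly like your co-block edge case). Your claim (i) is precisely the contrapositive of this: unread in every dimension implies strictly after the stop line in every dimension, which implies $p \succ t$. Up to that point the two routes are logically the same argument. Where you genuinely add something is claim (ii): the paper's proof stops at ``the tuple $t$ must have been identified,'' i.e.\ it establishes only \emph{coverage} (every skyline tuple's entry has been read before the stop line is cleared) and silently assumes that having read an entry means having classified it correctly. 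You make that step explicit by invoking Theorem~\ref{the:sdi} together with the observation that any interleaved top-down traversal still visits $\mathcal I_i$ in increasing $o_i$ order, so all earlier skyline tuples needed as comparison partners on $\mathcal I_i$ are available when $t$'s block is processed; this is what turns the coverage fact into a termination-correctness statement. As a side remark, the paper's displayed implication $p[k] \succ t[k] \Rightarrow b_k(p) < b_k(t)$ has the roles of $p$ and $t$ flipped relative to what its own argument needs (as written, it would place $t$ \emph{after} the stop line, proving nothing); your version states the inequalities in the correct direction.
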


\begin{proof}
Let $p$ be a skyline tuple and $t \in \mathcal S \setminus p$ be a skyline tuple, we have (1) $t \prec\succ p$ or (2) $t = p$, if $t$ and $p$ have identical dimensional values.
In the first case, $t \prec\succ p \Rightarrow \exists k, p[k] \succ t[k] \Rightarrow b_k(p) < b_k(t)$, that is, if the index traversal passes the stop line $S_p$, the tuple $t$ must have been identified at least in the dimension $D_k$.
In the second case, we have $b_i(p) = b_i(t)$ for any dimension $D_i$.
In both cases, if all stop line blocks have been processed, then all skyline tuples have been found.
\end{proof}

In principle, any skyline tuple can be chosen to form a stop line, however, different stop lines behave differently in pruning useless tuples.
For instance, as shown in Table \ref{tab:sl-6}, the stop line $S_6$ created from $t_6$ covers totally 41 index entries and two tuples $\{t_7, t_9\}$ can be pruned; however, as shown in Table \ref{tab:sl-0}, the the stop line $S_0$ created from $t_0$ covers only 37 index entries and no tuple can be pruned.
Obviously, a good stop line shall cover index entries at much as possible, so we can use an optimal function, $min_p$, to minimize the offsets of a skyline tuple in all dimensional indexes for building a stop line $S_{p}$, defined as:
\begin{align*}
S_{p} = \mathop{\arg\min}_{p}(max\{o_i(p)\}, \sum_{i = 1}^{d} o_i(p) / d)
\end{align*}
The function min(p) sorts tuples first by the maximum offset, then by the mean offset in all dimensional indexes, so the minimized skyline tuple is the best stop line tuple.
Hence, a dynamically updated stop tuple $p$ can be maintained by keeping $min(p) < min(t)$ for any new skyline tuple $t$.

\begin{table}[htbp]
\begin{center}
{\scriptsize\begin{tabular}{|c|c|c|c|c|c|}
\hline
$\mathcal I_1$ & $\mathcal I_2$ & $\mathcal I_3$ & $\mathcal I_4$ & $\mathcal I_5$ & $\mathcal I_6$\\
\hline
{\bf 4.7:1} & \underline{\bf 1.3:0} & {\bf 1.9:6} & \underline{\bf 4.5:0} & {\bf 3.4:6} & {\bf 1.8:6}\\
{\bf 5.3:3} & {\bf 5.2:4} & {\bf 2.6:5} & {\bf 4.7:5} & {\bf 3.8:1} & \underline{\bf 2.1:0}\\
{\bf 5.3:6} & {\bf 6.6:3} &     {4.8:9} & {\bf 5.5:4} & {\bf 4.1:4} & {\bf 5.1:1}\\
    {5.3:7} & {\bf 6.7:1} & {\bf 5.1:4} &     {5.8:2} & \underline{\bf 5.3:0} & {\bf 6.2:5}\\
    {6.7:8} &     {7.3:8} &     {5.3:2} & {\bf 5.9:6} &     {5.3:8} &     {6.5:9}\\
\underline{\bf 7.5:0} & {\bf 7.5:6} & {\bf 6.7:1} & {\bf 6.8:3} & {\bf 5.8:3} &     {7.5:2}\\
    {7.5:9} &     {7.5:7} & {\bf 6.7:3} &     {7.2:7} &     {6.3:7} & {\bf 7.5:4}\\
    {8.4:2} & {\bf 7.6:5} &     {6.7:7} &     {8.9:9} &     {6.7:2} &     {8.7:8}\\
{\bf 8.4:4} &     {9.4:2} & \underline{\bf 7.5:0} & {\bf 9.3:1} & {\bf 7.3:5} &     {8.8:7}\\
{\bf 9.1:5} &     {9.6:9} &     {7.6:8} &     {9.7:8} &     {9.5:9} & {\bf 9.3:3}\\
\hline
\end{tabular}}
\end{center}
\caption{The stop line created from tuple $t_0$ covers 37 index entries in total.}
\label{tab:sl-0}
\end{table}

Nevertheless, the use of stop lines requires that all stop line blocks in all dimensions being examined, so it is difficult to judge whether a scan reaches first at the end of any dimensional index or first finishes to examine all stop line blocks although we can state that the setting of stop lines can effectively help the Skyline computation in correlated data.
We also note that the use of stop lines require that all dimensions are indexed, which is an additional constraint while applying Theorem \ref{the:sdi} and Theorem \ref{the:stp} together since Theorem \ref{the:sdi} does not impose that all dimensions must be constructed.
We propose, thus, to consider different application strategies of Theorem \ref{the:sdi} and Theorem \ref{the:stp} with respect to particular use cases and data types to accelerate the Skyline computation.

\subsection{The RangeSearch Algorithm}

Theorem \ref{the:sdi} allows to reduce the count of dominance comparisons while computing the skyline.
However, as mentioned in Section 3, the duplicate dimensional values severely augment the dominance comparisons count because a {\BNL} based local comparisons must be applied.
Notice that it is useless to apply {\SFS} or {\SALSA} to such local comparisons because their settings of sorting functions disable one of the most important features of our dimension indexing based approach: individual criterion including that for non-numerical values of skyline selection can be independently applied to each dimension.

In order to reduce the impact of duplicate dimensional values, we propose a simple solution based on sorting dimensional indexes by their cardinalities $|\mathcal I_i|$.
The computation starts from the best dimensional index so the calls of {\BNL} can be minimized.
For instance, in Table \ref{tab:full-di}, all dimensional indexes can be sorted as $|\mathcal I_4| > |\mathcal I_2| = |\mathcal I_5| = |\mathcal I_6| > |\mathcal I_3| > |\mathcal I_1|$, where the best dimensional index $\mathcal I_4$ contains no duplicate dimensional values so Lemma \ref{lem:dist} can be directly established so dimension switching can be performed earlier.

We present then {\SDIRS} (RangeSearch), an algorithm with the application of Theorem \ref{the:sdi} and Theorem \ref{the:stp} by performing dominance comparisons only with a range of skyline tuples instead of all, as shown in Algorithm \ref{algo:rs}, to the skyline computation on sorted dimensional indexes.

\begin{algorithm}[htbp]
\SetKw{And}{and}
\SetKw{Break}{break}
\SetKw{Or}{or}
\KwIn{Sorted dimensional indexes $\mathcal I_\mathcal D$}
\KwOut{Complete set $\mathcal S$ of all skyline tuples}
$L \leftarrow$ empty stop line\\
\While{true} {
    \ForEach{$\mathcal I_i \in \mathcal I_\mathcal D$} {
        \While{$B \leftarrow$ get next block from $\mathcal I_i$}{
            \If{$B = null$}{
                \Return{$\mathcal S$}\\
            }
            \ForEach{$t \in B$}{
                \If{$t$ has been compared \And $t \not\in \mathcal S$}{
                    remove $t$ from $B$\\
                }
            }
            $\mathcal S_B \leftarrow$ compute the block Skyline from $B$ by {\BNL}\\
            \ForEach{$t \in \mathcal S_B$ \And $t \not\in \mathcal S$}{
                \If{$\mathcal S_i \not\prec t$}{
                    $\mathcal S_i \leftarrow \mathcal S_i \cup t$\\
                    $\mathcal S \leftarrow \mathcal S \cup t$\\
                    $L_t \leftarrow$ build stop line from $t$\\
                    \If{$L = \emptyset$ \Or $L_t$ is better than $L$}{
                        $L \leftarrow L_t$\\
                    }
                }
            }
            \If{$o_d \geq L[d]$ for each dimension $d$}{
                \Return{$\mathcal S$}\\
            }
           \If{\mbox{\tt [dimension-switching]}}{
                \Break\\
           }
        }
    }
}
\caption{{\SDIRS} ({\RS})}
\label{algo:rs}
\end{algorithm}

The algorithm accepts a set of sorted dimensional indexes $\mathcal I_\mathcal D$ of a $d$-dimensional database $\mathcal D$ as input and outputs the complete set $\mathcal S$ of all skyline tuples.
First, we initialize an empty stop line $L$, then we enter a Round Robin loop that find the complete set of all skyline tuples with respect to Theorem \ref{the:sdi} and Theorem \ref{the:stp}.
In each dimensional index $\mathcal I_i$ based iteration, we first get the next block $B$ of index entries from $\mathcal I_i$.
According to Theorem \ref{the:sdi}, if $B$ is null, which means that the end of $\mathcal I_i$ is reached, we exit the algorithm by returning $\mathcal S$; otherwise, we treat all index entries block by block to find skyline tuples.
If a tuple $t \in B$ is already compared and marked as non skyline tuple, we should ignore it in order to prevent comparing it with other tuples again; however, if $t$ is a skyline tuple, we shell keep it because $t$ may dominate other new tuples in block-based {\BNL} while computing the block Skyline $\mathcal S_B$.
Therefore, for each tuple $t \in \mathcal S_B$ such that $t \not\in \mathcal S$ (again, we do not want to compare a skyline tuple with other skyline tuples), we compare it with all existing skyline tuples $\mathcal S_i$ present in current dimension $D_i$.
Here we introduce a shortcut operator $\mathcal S_i \not\prec t$ at line 12 that means that none of skyline tuples in $\mathcal S_i$ dominates $t$, and according to Theorem \ref{the:sdi}, $t$ must be a skyline tuple in this case and must be added to the dimensional Skyline $\mathcal S_i$ and the global Skyline $\mathcal S$.
Furthermore, with respect to Theorem \ref{the:stp}, we build a new stop line $L_t$ from each new skyline tuple $t$ and if it is better than current stop line $L$ (or no stop line is defined), we update $L$ by $L_t$.
While the above dominance comparisons are finished, we compare current dimensional iteration position on all dimensions with the latest stop line, if in each dimension the stop line entry is reached, {\RS} stops by returning the complete Skyline $\mathcal S$.
Otherwise, {\RS} switch to the next dimension and repeat the above procedure with respect to a particular {\tt [dimension-switching]} strategy.

In our approach, we consider {\em breadth-first dimension switching} ({\BFS}) and {\em depth-first dimension switching} ({\DFS}).
With {\BFS}, if a block is examined and if {\SDIRS} shall continue to run, then the next dimension will be token.
However, in depth-first switching, if a block is examined and if {\SDIRS} shall continue to run, {\SDIRS} continues to go ahead in current dimension if current block contains new skyline tuples, till to meet a block without any new skyline tuple.
The difference between breadth-first switching and depth-first switching is clear.
{\DFS} tries to accelerate skyline tuple searching in each dimension, this strategy benefits the most from Theorem \ref{the:sdi}; furthermore, if the best stop line is balanced in each dimension, then {\DFS} reaches well the stop line in each dimension so more tuples can be pruned.
However, {\DFS} is not efficient if there are a large number of duplicate values in some dimensions because each block shall be examined before switching to the next dimension.
In this case, depth-first switching takes duplicate dimensional values into account: since all dimensional indexes are sorted with respect to their cardinalities, {\SDIRS} starts always from the best dimensions that contain less duplicate values and finds skyline tuples as much as possible by depth-first switching, hence, while switching to other dimensions, it is possible that some tuples in some blocks have already been compared or are already skyline tuples so no more comparisons will be performed.

In comparison with sorting based algorithms like {\SFS} and {\SALSA}, {\SDIRS} allows to sort tuples with respect to each dimension, which is interesting while different criteria are applied to determine the skyline.
For instance, we can specify the order {\em less than} ($<$) a one dimension and the order {\em greater than} ($>$) to another dimension, without of additional calculation to unifying and normalizing dimensional values.
With the same reason, {\SDIRS} allows to directly process categorical data as numerical data: if any total order can be defined to a categorical attribute, for instance, the user preference on colors such that {\tt blue} $\succ$ {\tt green} $\succ$ {\tt yellow} $\succ$ {\tt red}, then {\SDIRS} can treat such values as any ordered numerical values without any adaptation.

With dimensional indexes, {\SDIRS} is efficient in both space and time complexities.
The storage requirement for dimensional indexes is guaranteed: for instance, a C/C++ implementation of {\SDIRS} may consider an index entry as a {\tt struct} of tuple ID and dimensional value that requires 16 bytes (64bit ID and 64bit value), therefore if each dimensional index corresponds to a {\tt std::vector} structure, the in-memory storage size of dimensional indexes is the double of the database size: for instance, 16GB heap memory fits the allocation of 1,000,000,000 structures of ID/value, as 12,500,000 8-dimensional tuples.
Let $d$ be the dimensionality, $n$ be the cardinality of data, and $m$ be the size of the skyline.
The generation of dimensional indexes requires $\mathcal O(dn\lg{n})$ with respect to a general-purpose sorting algorithm of $\mathcal O(n\lg{n})$ complexity.
For the best-case, that is, $m = 1$, {\SDIRS} finishes in $\mathcal O(1)$ since the only skyline tuple is the stop line and the computation stops immediately; for the worst-case, all $n$ tuples are skyline tuples, {\SDIRS} finishes in $$\mathcal O(\dfrac{n(n - 1)}{2})$$ according to Theorem \ref{the:sdi} if each block contains only one tuple (that is, the case without duplicate dimensional values).
More generally, if the best dimension index contains $k$ duplicate values, then {\SDIRS} finishes in $$\mathcal O(k^2 + \dfrac{(n - k)(n -k - 1)}{2})$$ since the worst-case is that all $k$ duplicate values appear in the same block.

\section{Experimental Evaluation}

In this section, we report our experimental results on performance evaluation of {\SDIRS} that is conducted with both of {\BFS} and {\DFS} dimension switching, and is compared with three baseline algorithms {\BNL}, {\SFS}, and {\SALSA} on synthetic and real benchmark datasets.
The {\tt vol} sorting function and the {\tt max} sorting function are respectively applied to {\SFS} and {\SALSA} as mentioned in \cite{Bartolini2006SaLSa}.

\begin{figure*}[t]
\begin{center}
{\scriptsize\begin{tabular}{cccc}
\multicolumn{2}{c}{Run-time on {\em independent} datasets.} & \multicolumn{2}{c}{Dominance comparisons on {\em independent} datasets.}\\
\includegraphics[width=4cm]{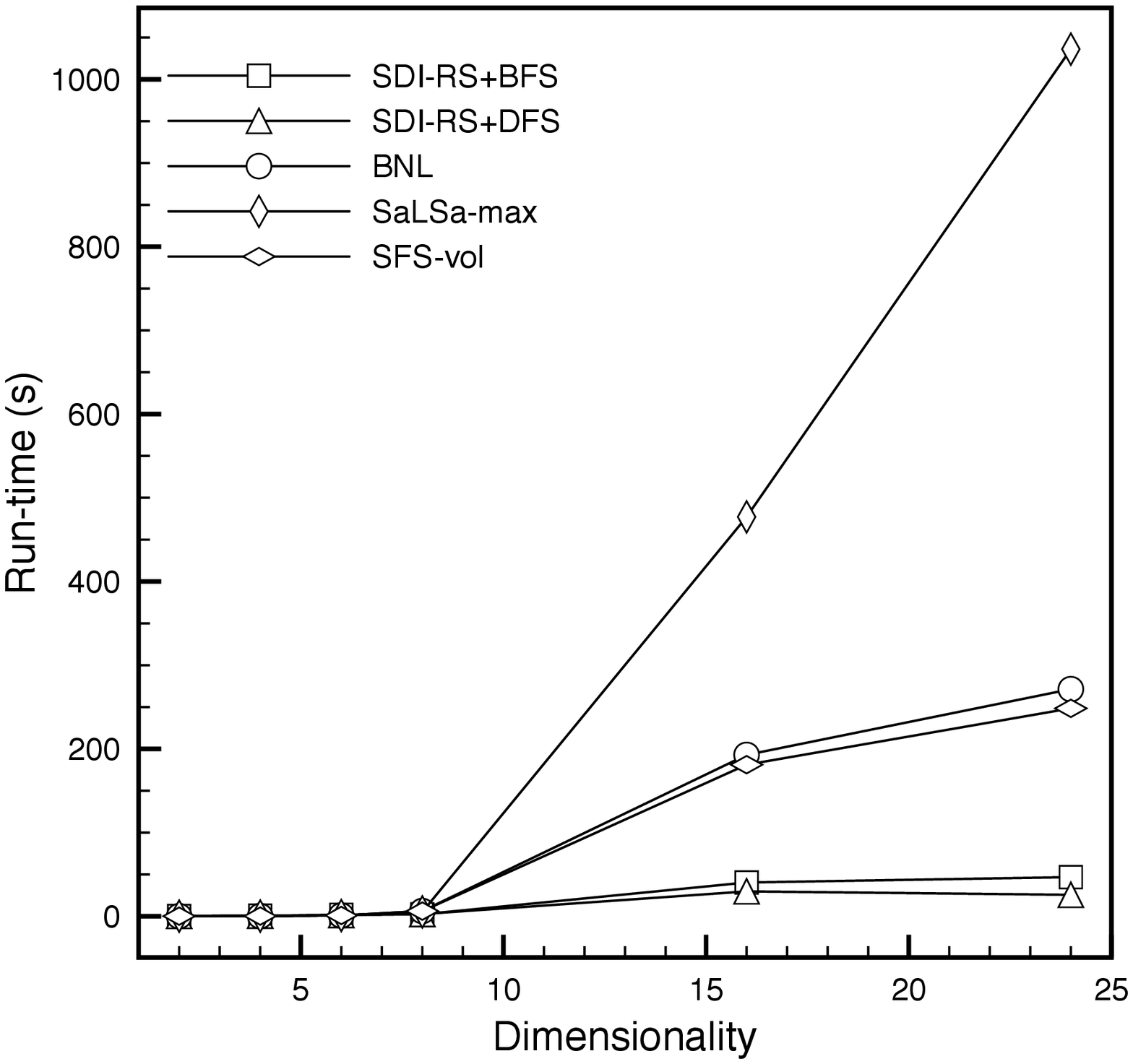}
&
\includegraphics[width=4cm]{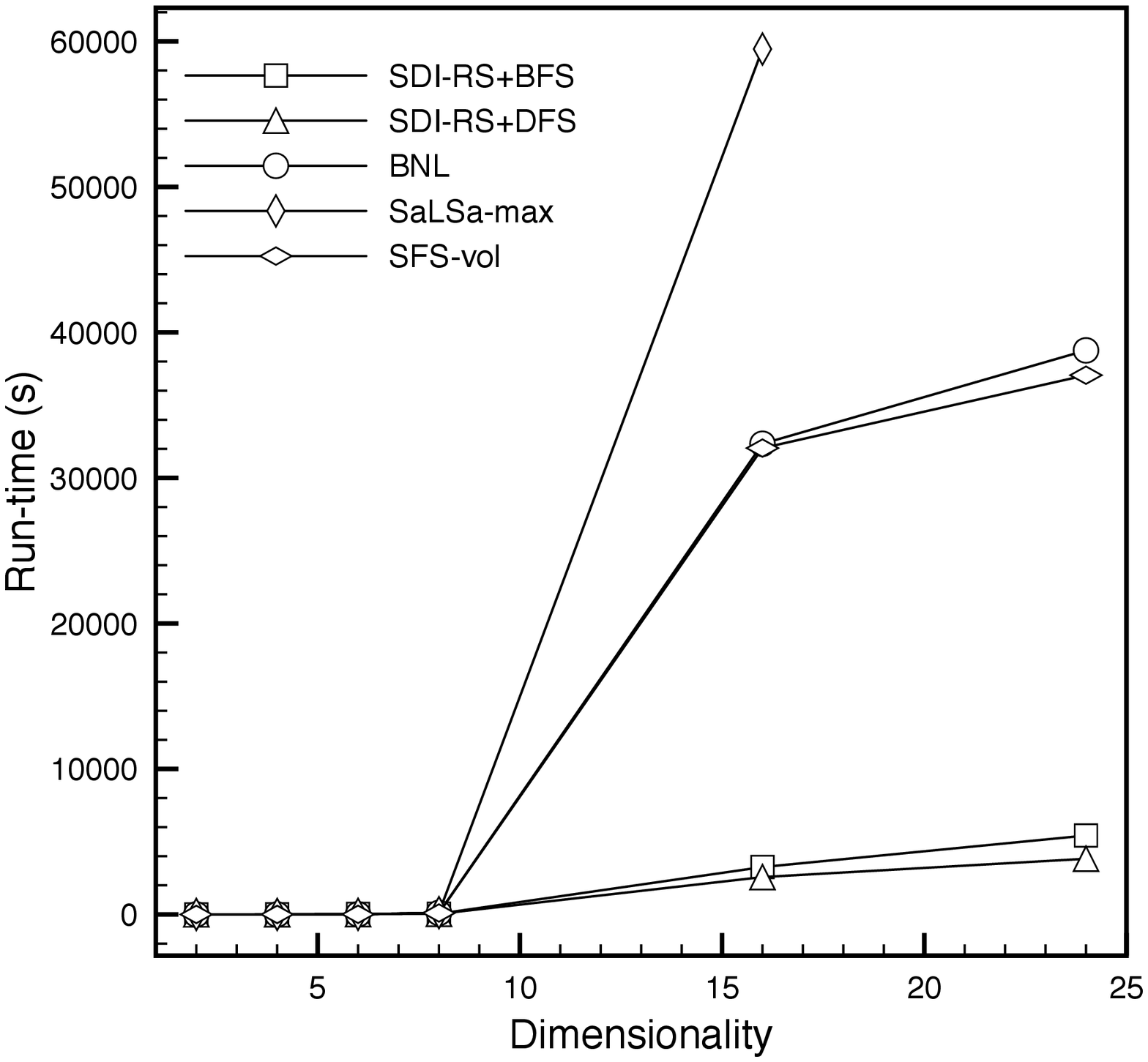}
&
\includegraphics[width=4cm]{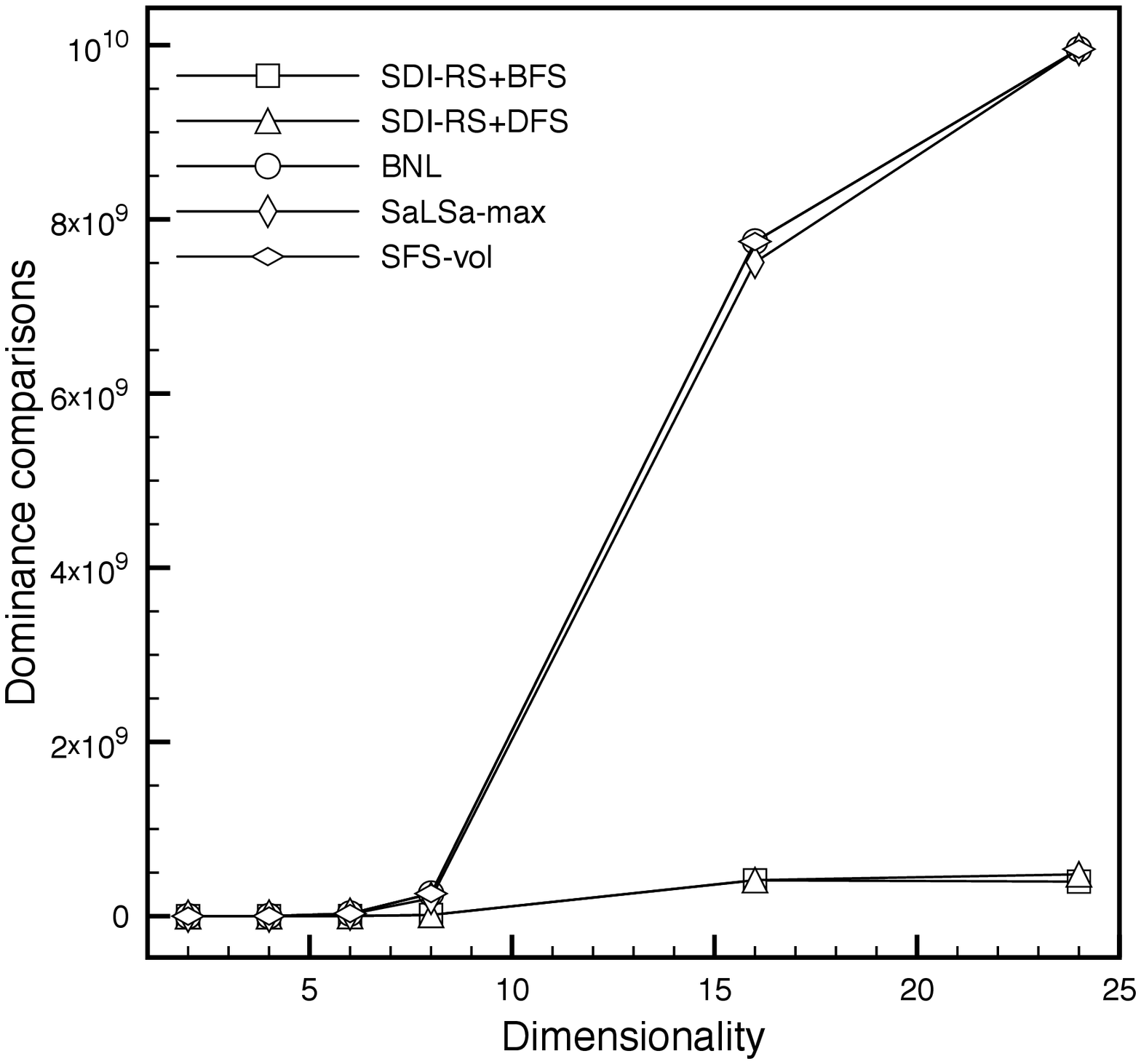}
&
\includegraphics[width=4cm]{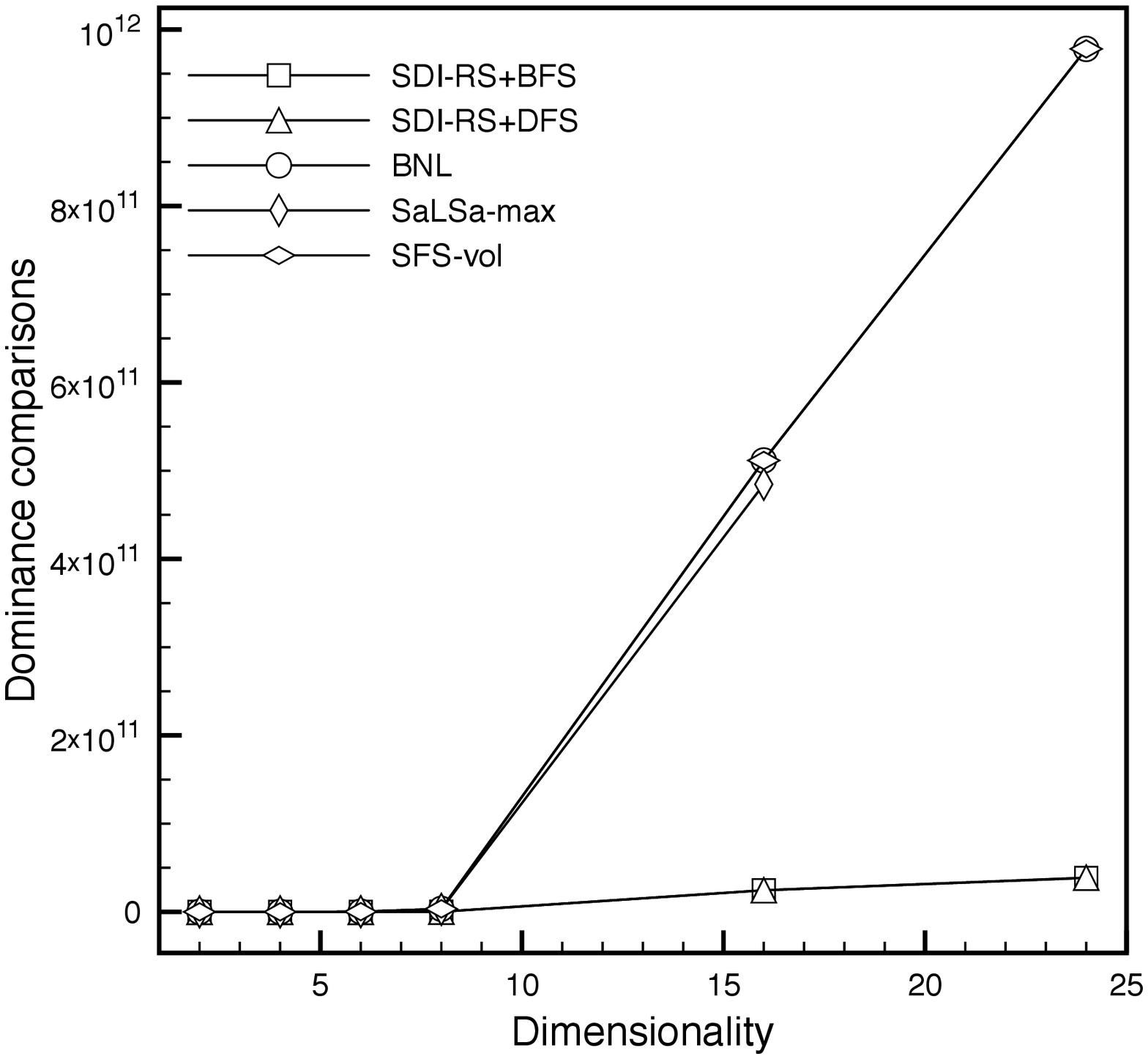}\\
(a) $n = 100$K & (b) $n = 1$M & (c) $n = 100$K & (d) $n = 1$M\\
&&&\\
\multicolumn{2}{c}{Run-time on {\em correlated} datasets.} & \multicolumn{2}{c}{Dominance comparisons on {\em correlated} datasets.}\\
\includegraphics[width=4cm]{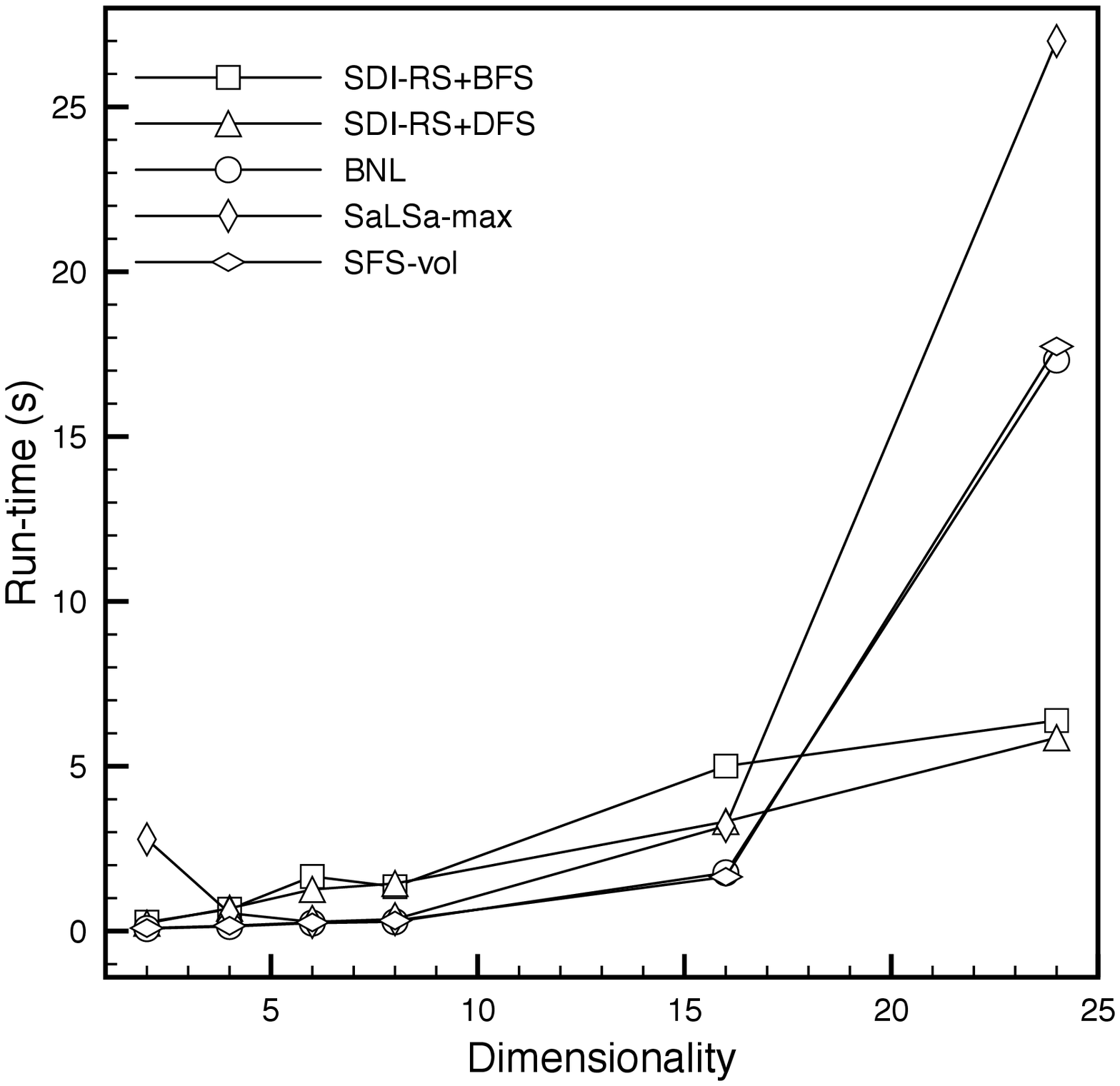}
&
\includegraphics[width=4cm]{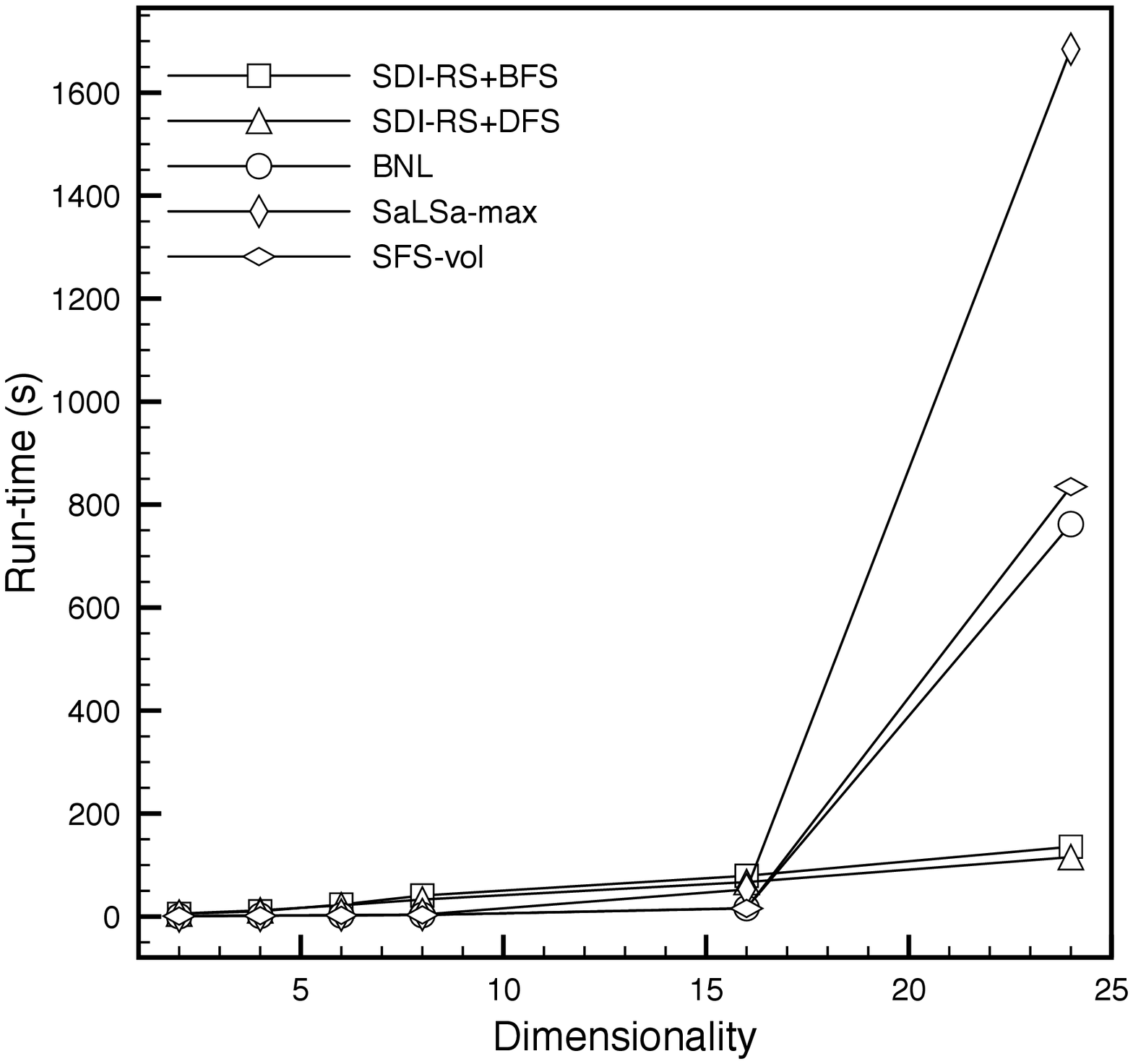}
&
\includegraphics[width=4cm]{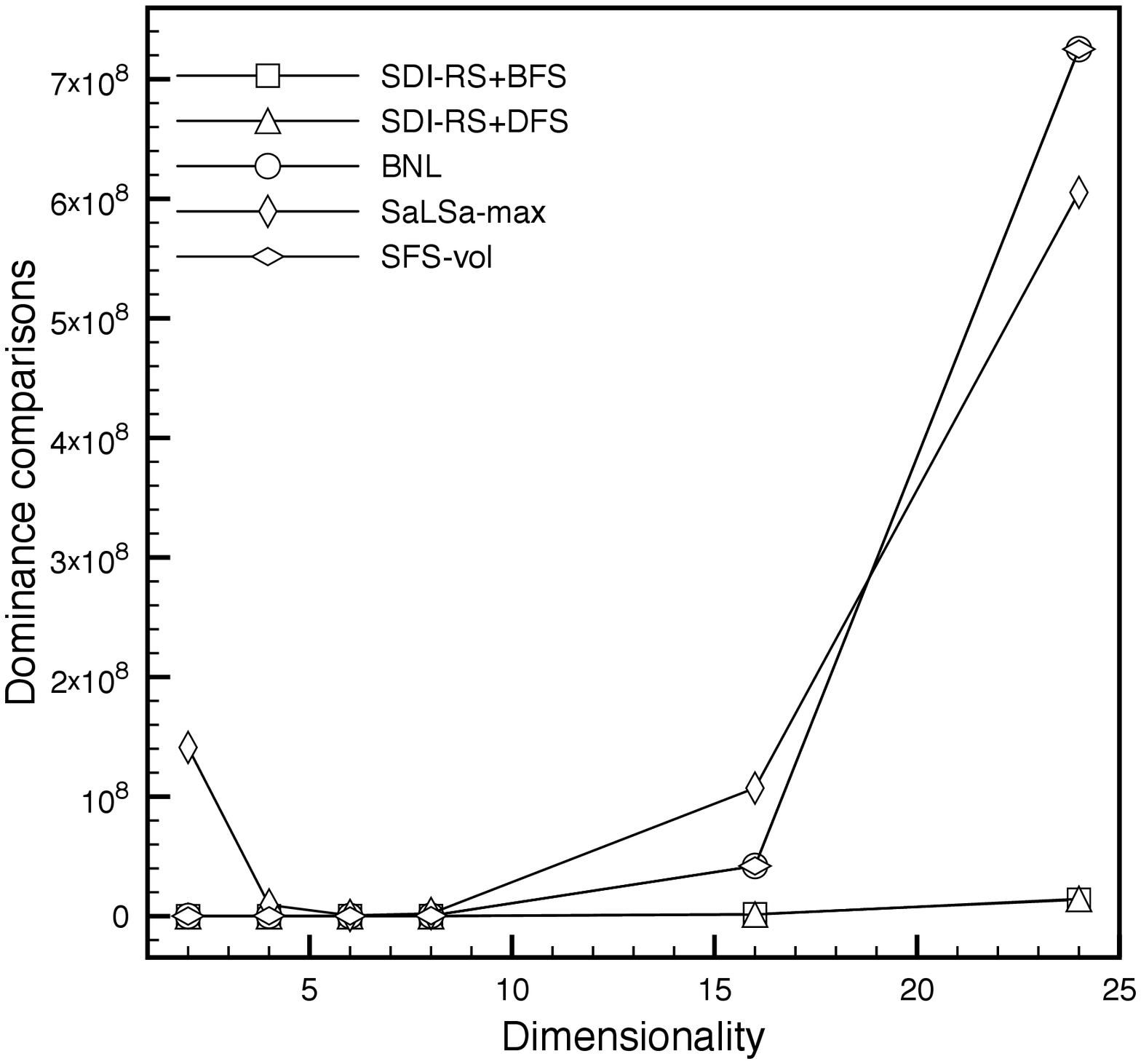}
&
\includegraphics[width=4cm]{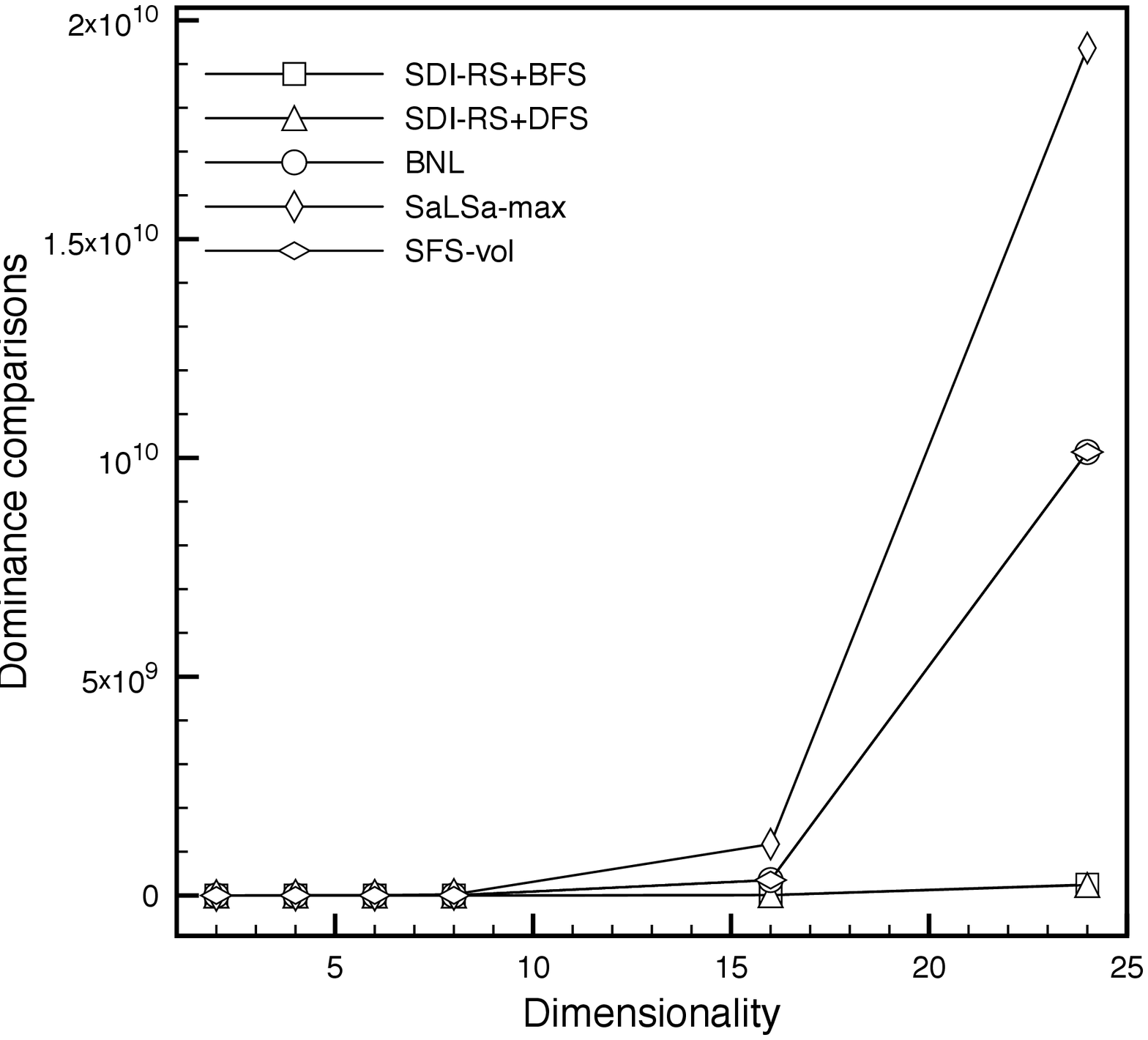}\\
(e) $n = 100$K & (f) $n = 1$M & (g) $n = 100$K & (h) $n = 1$M\\
&&&\\
\multicolumn{2}{c}{Run-time on {\em anti-correlated} datasets.} & \multicolumn{2}{c}{Dominance comparisons on {\em anti-correlated} datasets.}\\
\includegraphics[width=4cm]{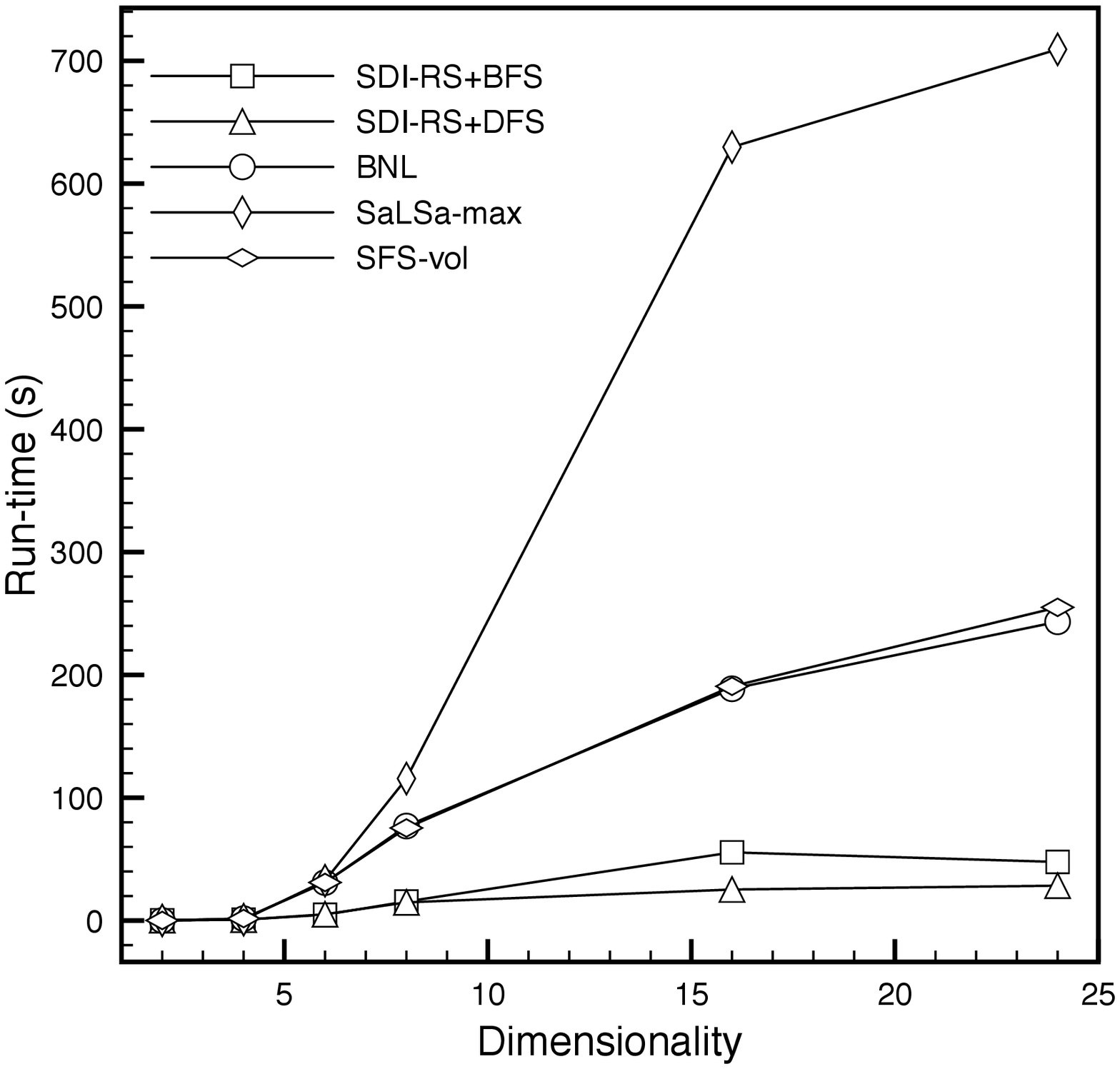}
&
\includegraphics[width=4cm]{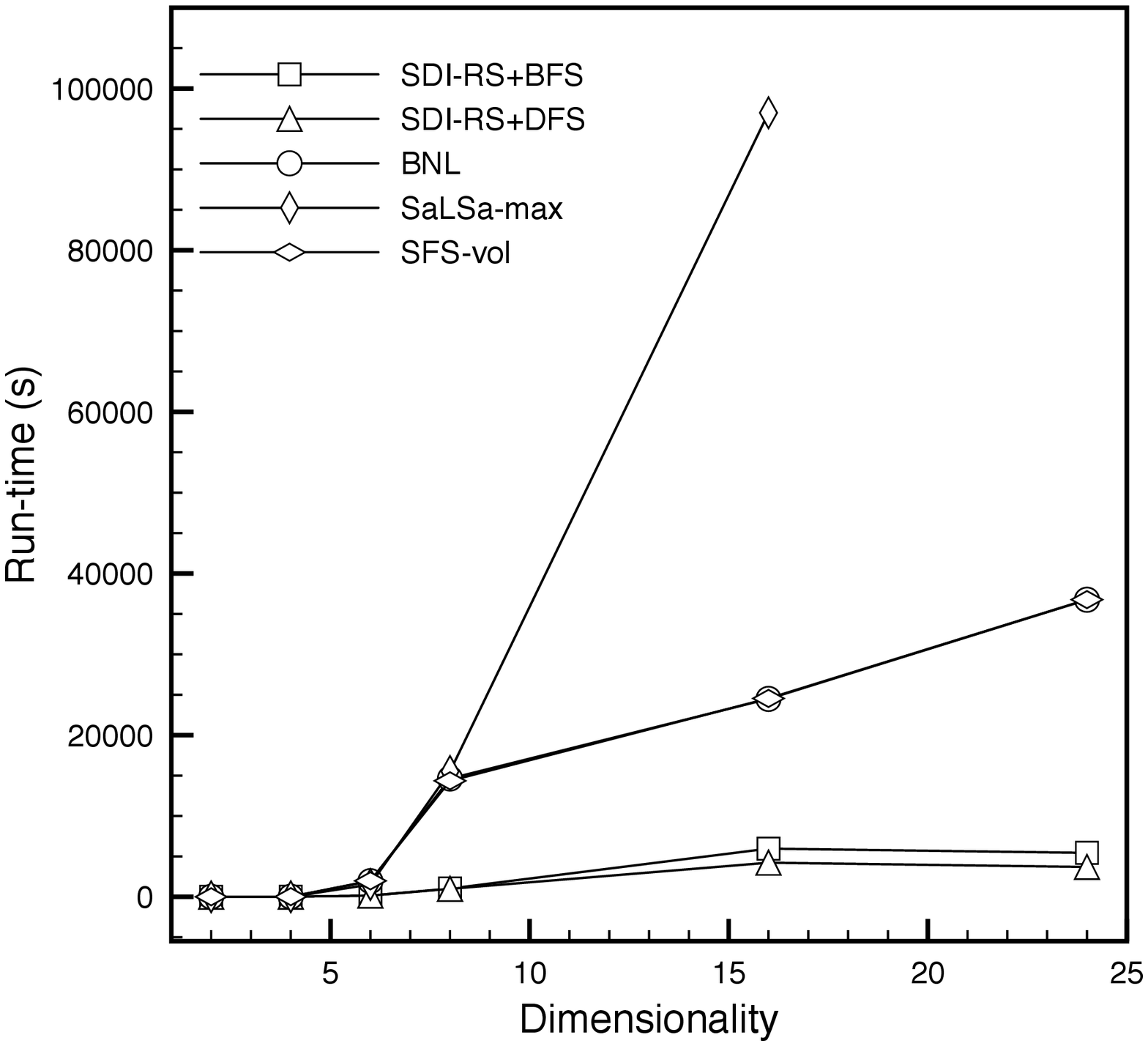}
&
\includegraphics[width=4cm]{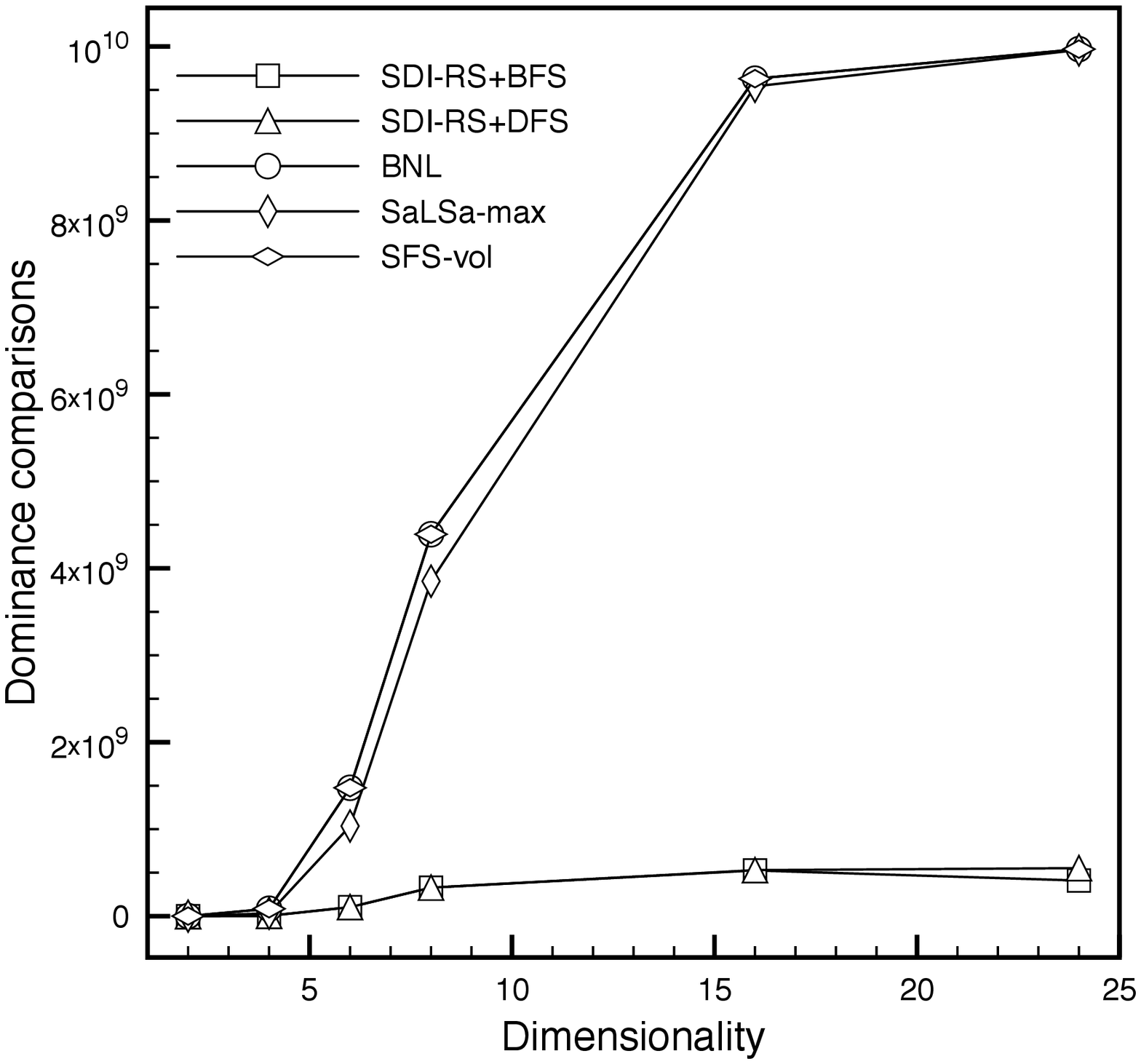}
&
\includegraphics[width=4cm]{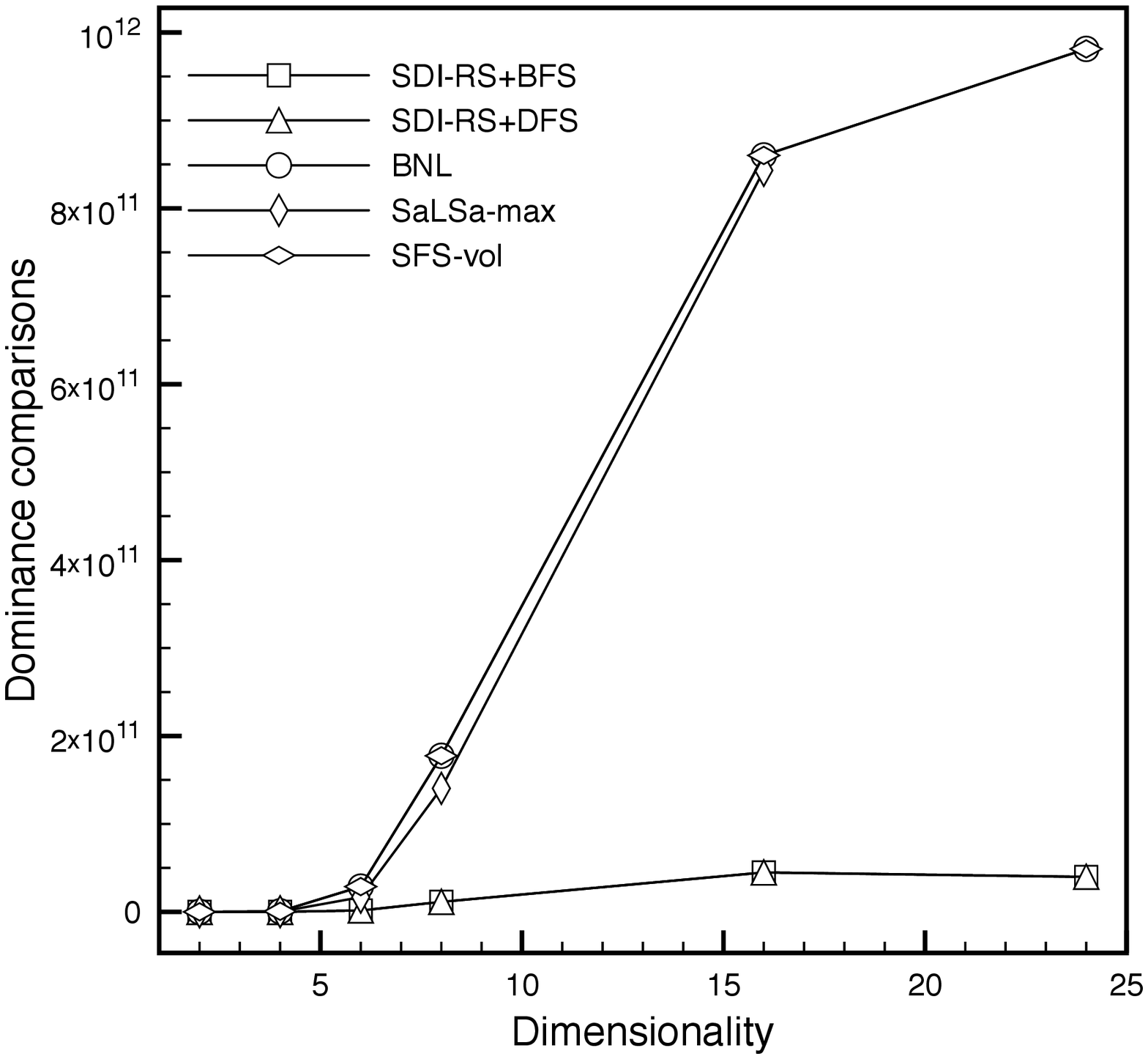}\\
(i) $n = 100$K & (j) $n = 1$M & (k) $n = 100$K & (l) $n = 1$M\\
\end{tabular}}
\end{center}
\caption{Overall performance of {\SDIRS}.}
\label{fig:perf}
\end{figure*}

We generate {\em independent}, {\em correlated}, and {\em anti-correlated} synthetic datasets using the standard Skyline Benchmark Data Generator\footnote{\url{http://pgfoundry.org/projects/randdataset}} \cite{Borzsony2001Operator} with the cardinality $n \in \{100K, 1M\}$ and the dimensionality $d$ in the range of 2 to 24.
Three real datasets {\tt NBA}, {\tt HOUSE}, and {\tt WEATHER} \cite{chester2015scalable} have also been used.
Table \ref{tab:syn} and Table \ref{tab:real} show statistics of all these datasets.

\begin{table}[htbp]
\begin{center}
{\scriptsize\begin{tabular}{r|l|rrrrrr}
\hline
\multicolumn{2}{c|}{Dataset} & $d = 2$ & $d = 4$ & $d = 6$ & $d = 8$ & $d = 16$ & $d = 24$\\
\hline
\hline
& 100K & 12 & 282 & 2534 & 9282 & 82546 & 99629\\
\cline{2-8}
Independent & 1M & 17 & 423 & 6617 & 30114 & 629091 & 981611\\
\hline
& 100K & 3 & 9 & 49 & 135 & 3670 & 13479\\
\cline{2-8}
Correlated & 1M & 1 & 19 & 36 & 208 & 8688 & 58669\\
\hline
& 100K & 56 & 3865 & 26785 & 55969 & 96816 & 99730\\
\cline{2-8}
Anti-correlated & 1M & 64 & 8044 & 99725 & 320138 & 892035 & 984314\\
\hline
\end{tabular}}
\end{center}
\caption{Skyline size of synthetic datasets.}
\label{tab:syn}
\end{table}

\begin{table}[htbp]
\begin{center}
{\scriptsize\begin{tabular}{l|rrr}
\hline
Dataset & Cardinality ($n$) & Dimensionality ($d$) & Skyline Size ($m$)\\
\hline
\hline
{\tt NBA} & 17264 & 8 & 1796\\
\hline
{\tt HOUSE} & 127931 & 6 & 5774\\
\hline
{\tt WEATHER} & 566268 & 15 & 63398\\
\hline
\end{tabular}}
\end{center}
\caption{Statistics of real datasets.}
\label{tab:real}
\end{table}

We implemented {\SDIRS} in C++ with {\tt C++11} standard, where dimensional indexes were implemented by STL {\tt std::vector} and {\tt std::sort()}.
In order to to evaluate the overall performance of our {\SDIRS}, three baseline algorithms {\BNL}, {\SFS}, and {\SALSA} were also implemented in C++ with the same code-base.
All algorithms are compiled using {\tt LLVM Clang} with {\tt -O3} optimization flag.
All experiments have been performed on a virtual computation node with 16 vCPU and 32GB RAM hosted in a server with 4 Intel Xeon E5-4610 v2 2.30GHz processors and 256GB RAM.

Figure \ref{fig:perf} shows the overall run-time, including loading/indexing data, and the total dominance comparison count of {\SDIRS} and {\BNL}/{\SFS}/{\SALSA} on 100K and 1M datasets, where the dimensionality is set to 2, 4 6, 8, 16, and 24.
We note that in the case of low-dimensional datasets, such as $d \leq 8$, there are no very big differences between all these 4 algorithms; however, {\SDIRS} extremely outperforms {\BNL}/{\SFS}/{\SALSA} in high-dimensional datasets, for instance $d \geq 16$.
Indeed, the run-time of {\SDIRS} is almost linear with respect to the increase of dimensionality, which is quite reasonable since the main cost in skyline computation is dominance comparison and {\SDIRS} allows to significantly reduce the total count of dominance comparisons.
On the other hand, it is surprising that {\SALSA} did not finish computing on all 24-dimensional datasets as Figure \ref{fig:perf} (b) and Figure \ref{fig:perf} (j), for more than 5 hours.
Notice that {\SALSA} outperforms {\BNL} and {\SFS} on real datasets.

We note that the total run-time of {\SDIRS} on low-dimensional correlated datasets is much than {\BNL}/{\SFS}/{\SALSA} as regards independent and anti-correlated datasets, because {\SDIRS} requires building dimensional indexes.
Table \ref{tab:time} details the skyline searching time $t_S$ (the time elapsed on dominance comparisons and data access in msec) and total run-time $t_T$ (the time elapsed on the whole process, including data loading and sorting/indexing in msec).
It is clear that the construction of dimensional indexes in {\SDIRS} is essential while the total processing time is short.

\begin{table}[htbp]
\begin{center}
{\scriptsize\begin{tabular}{r|l|r|r|r|r|r|r|r|r}
\hline
\multicolumn{2}{c|}{} & \multicolumn{2}{c|}{$d = 2$} & \multicolumn{2}{c|}{$d = 4$} & \multicolumn{2}{c|}{$d = 6$} & \multicolumn{2}{c}{$d = 8$}\\
\hline
\multicolumn{2}{c|}{} & $t_S$ & $t_T$ & $t_S$ & $t_T$ & $t_S$ & $t_T$ & $t_S$ & $t_T$\\
\hline
{\SDIRS} & 100K & 0.14 & 271 & 0.38 & 664 & 423 & 1657 & 243 & 1348\\
\cline{2-10}
 +{\BFS} & 1M & 0.17 & 5161 & 0.64 & 10293 & 1.34 & 23388 & 9896 & 40830\\
\hline
{\SDIRS} & 100K & 0.17 & 238 & 0.25 & 692 & 0.98 & 1263 & 4.75 & 1443\\
\cline{2-10}
+{\DFS} & 1M  & 0.14 & 5772 & 0.42 & 12069 & 1.32 & 21845 & 5.86 & 33012\\ 
\hline
{\BNL} & 100K & 2.51 & 75 & 5.36 & 141 & 5.33 & 243 & 10.29 & 284\\
\cline{2-10}
& 1M & 25.49 & 744 & 45.13 & 1468 & 62.56 & 2267 & 94.04 & 2901\\
\hline
{\SALSA} & 100K & 2686 & 2784 & 386 & 543 & 26.03 & 278 & 51.67 & 361\\
\cline{2-10}
{\tt max} & 1M & 88.63 & 1067 & 451 & 2117 & 377 & 2987 & 674 & 3829\\
\hline
{\SFS} & 100K & 1.49 & 91 & 4.53 & 162 & 4.91 & 263 & 10.99 & 320\\
\cline{2-10}
{\tt vol} & 1M & 13.33 & 931 & 33.78 & 1605 & 43.04 & 2346 & 88.83 & 3128\\
\hline
\end{tabular}}
\end{center}
\caption{Skyline searching time (msec) total run-time (msec) on correlated datasets.}
\label{tab:time}
\end{table}

Table \ref{tab:perf-real} shows the performance of {\SDIRS} on real datasets.
{\DFS} dimension switching outperforms {\BFS} dimension switching on both {\tt NBA} and {\tt HOUSE} datasets however {\BFS} outperforms {\DFS} on {\tt WEATHER} dataset.
After having investigated these datasets, we confirm that there are a large number of duplicate values in several dimension of {\tt WEATHER} dataset so the {\BFS} dimension switching strategy takes its advantage.
{\BNL} outperforms all other tested algorithms on {\tt HOUSE} dataset, which corresponds to the results obtained from synthetic low-dimensional independent datasets.
Furthermore, the update numbers of the best stop line in {\SDIRS} is quite limited with respect to the size of skylines.

\begin{table}[htbp]
\begin{center}
{\scriptsize\begin{tabular}{l|r|r|r|r|r}
\hline
 & {\SDIRS}+{\BFS} & {\SDIRS}+{\DFS} & {\BNL} & {\SALSA} & {\SFS}\\
\hline
\hline
Dominance & 680,388 & 662,832 & 8,989,690 & 6,592,178 & 8,989,690\\
\hline
Search Time (msec) & 54 & 38 & 151 & 108 & 147\\
\hline
Total Time (msec) & 172 & 158 & 191 & 152 & 189\\
\hline
Stop Line Update & 15 & 32 & -- & -- & --\\
\hline
\end{tabular}}\\
(a) {\tt NBA} dataset: $d = 8$, $n = 17264$, $m = 1796$.\\
~\\
{\scriptsize\begin{tabular}{l|r|r|r|r|r}
\hline
 & {\SDIRS}+{\BFS} & {\SDIRS}+{\DFS} & {\BNL} & {\SALSA} & {\SFS}\\
\hline
\hline
Dominance & 4,976,773 & 4,860,060 & 59,386,118 & 51,484,870 & 59,386,118\\
\hline
Search Time (msec) & 962 & 337 & 1,486 & 1,550 & 1,534\\
\hline
Total Time (msec) & 2,663 & 1,918 & 1,716 & 1,800 & 1,768\\
\hline
Stop Line Update & 16 & 18 & -- & -- & --\\
\hline
\end{tabular}}\\
(b) {\tt HOUSE} dataset: $d = 6$, $n = 127931$, $m = 5774$.\\
~\\
{\scriptsize\begin{tabular}{l|r|r}
\hline
 & {\SDIRS}+{\BFS} & {\SDIRS}+{\DFS}\\
\hline
\hline
Dominance & 1,744,428,382 & 1,737,143,260\\
\hline
Search Time (msec) & 48,773 & 58,047\\
\hline
Total Time (msec) & 65,376 & 77,665\\
\hline
Stop Line Update & 14 & 18\\
\hline
\end{tabular}}\\
{\scriptsize\begin{tabular}{l|r|r|r}
\hline
 & {\BNL} & {\SALSA} & {\SFS}\\
\hline
\hline
Dominance & 14,076,080,681 & 7,919,746,895 & 14,076,080,681\\
\hline
Search Time (msec) & 539,100 & 394,995 & 545,263\\
\hline
Total Time (msec) & 541,820 & 397,650 & 547,914\\
\hline
Stop Line Update & -- & -- & --\\
\hline
\end{tabular}}\\
(c) {\tt WEATHER} dataset: $d = 15$, $n = 566268$, $m = 63398$.\\
~\\
\end{center}
\caption{Performance evaluation on real datasets.}
\label{tab:perf-real}
\end{table}

We did not directly compare {\SDIRS} with all existing skyline algorithms, but with reference to most literature comparing proposed algorithms with {\BNL}, {\SFS}, or {\SALSA}, the comparative results obtained in our experimental evaluation indicate that {\SDIRS} outperforms the most of existing skyline algorithms.

\section{Conclusion}

In this paper, we present a novel efficient skyline computation approach.
We proved that in multidimensional databases, skyline computation can be conducted on an arbitrary dimensional index which is constructed with respect to a predefined total order that determines the skyline, we therefore proposed a dimension indexing based general skyline computation framework {\SDI}.
We further showed that any skyline tuple can be used to stop the computation process by outputting the complete skyline.
Based on our analysis, we developed a new progressive skyline algorithm {\SDIRS} that first builds sorted dimensional indexes then efficiently finds skyline tuples by dimension switching in order to minimize the count of dominance comparisons.
Our experimental evaluation shows that {\SDIRS} outperforms the most of existing skyline algorithms.
Our future research direction includes the further development of the {\SDI} framework as well as adapting the {\SDI} framework to the context of Big Data, for instance with the Map-Reduce programming model.

\bibliography{article}
\bibliographystyle{abbrv}

\end{document}